\definecolor{darkblue}{rgb}{0.15,0.35,0.55}
\definecolor{reddish}{rgb}{.8, 0.2, 0.2}
\definecolor{plotblue}{RGB}{0,120,200}
\definecolor{plotgreen}{RGB}{0,155,130}
\definecolor{plotorange}{RGB}{240,120,50}
\definecolor{plotmagenta}{RGB}{240,50,120}
\definecolor{plotgray}{RGB}{128,128,128}
\definecolor{plotcyan}{RGB}{50,190,240}
\definecolor{plotred}{RGB}{205,50,15}
\long\def\ca#1\cb{} 
\newcommand{\becs}{\begin{cases}}
	\newcommand{\bem}{\begin{matrix}}
		\newcommand{\bra}[1]{\langle#1|}
		\newcommand{\dya}[1]{|#1\rangle\langle#1|}
		\newcommand{\dyad}[2]{|#1\rangle\langle#2|}
		\newcommand{\encs}{\end{cases}}
	\newcommand{\enm}{\end{matrix}}
\newcommand{\inp}[1]{\langle#1|#1\rangle }
\newcommand{\inpV}[2]{\langle#1,#2\rangle }
\newcommand{\inpd}[2]{\langle#1|#2\rangle }
\newcommand{\ket}[1]{|#1\rangle }
\newcommand{\ot}{\otimes }
\newcommand{\Tr}{{\rm Tr}}
\newcommand{\AC}{{\mathcal A}}
\newcommand{\BC}{{\mathcal B}}
\newcommand{\DC}{{\mathcal D}}
\newcommand{\EC}{{\mathcal E}}
\newcommand{\HC}{{\mathcal H}}
\newcommand{\IC}{{\mathcal I}}
\newcommand{\JC}{{\mathcal J}}
\newcommand{\KC}{{\mathcal K}}
\newcommand{\LC}{{\mathcal L}}
\newcommand{\MC}{{\mathcal M}}
\newcommand{\NC}{{\mathcal N}}
\newcommand{\OC}{{\mathcal O}}
\newcommand{\PC}{{\mathcal P}}
\newcommand{\QC}{{\mathcal Q}}
\newcommand{\SC}{{\mathcal S}}
\newcommand{\eB}{\textbf{e}}
\newcommand{\xB}{\textbf{x}}
\newcommand{\Rbb}{\mathbb{R}}
\newcommand{\al}{\alpha }
\newcommand{\bt}{\beta }
\newcommand{\gm}{\gamma }
\newcommand{\dl}{\delta }
\newcommand{\lm}{\lambda }
\newcommand{\Lm}{\Lambda }
\newcommand{\sg}{\sigma }
\theoremstyle{remark}
\newtheorem{theorem}{Theorem}
\newtheorem{lemma}{Lemma}
\author{Vikesh Siddhu and John Smolin}
\title{Optimal one-shot entanglement sharing}
\affil{IBM Quantum, IBM T.J. Watson Research Center, Yorktown Heights, NY 10598, USA}
\date{06 Oct 2023}
\begin{document}

\maketitle
\begin{abstract}
    Sharing entanglement across quantum interconnects is fundamental for
    quantum information processing.  We discuss a practical setting where this
    interconnect, modeled by a quantum channel, is used once with the aim
    of sharing high fidelity entanglement.
    For any channel, we provide methods to easily find both this maximum
    fidelity and optimal inputs that achieve it. Unlike most metrics for
    sharing entanglement, this maximum fidelity can be shown to be
    multiplicative.  This ensures a complete understanding in the sense that
    the maximum fidelity and optimal inputs found in our one-shot setting
    extend even when the channel is used multiple times, possibly with other
    channels.
    Optimal inputs need not be fully entangled. We find the minimum
    entanglement in these optimal inputs can even vary discontinuously with
    channel noise.
    Generally, noise parameters are hard to identify and remain unknown for
    most channels. However, for all qubit channels with qubit environments, we
    provide a rigorous noise parametrization which we explain in-terms of
    no-cloning.
    This noise parametrization and a channel representation we call the
    standard Kraus decomposition have pleasing properties that make them both
    useful more generally.
\end{abstract}
\tableofcontents
\newpage

\section{Introduction}

Quantum computation and communication requires faithful transmission of quantum
information between various separated parties.
These parties may be closely separated quantum computing nodes or widely
separated receivers and transmitters of quantum states. The former appear in
models of a quantum intranet~\cite{chow21} while the latter appear in
discussions of a quantum internet~\cite{Kimble08, WehnerElkoussEA18}.
Noise in these, and other such setups hinders their use. A dominant source of
noise is the quantum interconnect carrying quantum information between parties.
This interconnect is modeled mathematically by a quantum channel, a completely
positive trace preserving map. Quantum information sent and processed across
this channel is equivalent to entanglement shared and processed using the
channel~\cite{BennettBrassardEA93}.
Without investigating methods, metrics, protocols and characteristics of
sharing entanglement across quantum channels, our understanding and ability to
control and scale quantum computation and communication remains partial.

The most well studied setting for sharing entanglement
allows asymptotically many channel uses~\cite{BennettDiVincenzoEA96, Wilde17}.
Across all channels used together, local pre- and post-processing of
entanglement is allowed along with classical communication from channel input
to output.
Using these allowed operations, the largest number of fully entangled states,
per channel use, shared with asymptotically vanishing error defines the quantum
capacity of the channel. Studies of this metric reveal that while
theoretically beautiful~\cite{Lloyd97, Shor02a, CaiWinterEA04, Devetak05,
DevetakShor05} a channel's quantum capacity is hard to compute and non-trivial
to understand in general~\cite{SmithYard08, CubittElkoussEA15}. Both these
features come from super-additivity. Super-additivity of quantum capacity
implies that the quantum capacity of several channels used jointly is not
completely specified by the quantum capacity of each channel~\cite{Siddhu21,
LeditzkyLeungEA22}.

Asymptotic channel capacities provide rich conceptual and practical
difficulties. For these reasons, it is desirable to study entanglement
transmission with as little encoding and decoding as possible.
The simplest setting here is a single use of a channel~(which can itself be
joint uses of many channels) with no post-processing. This setting need
not allow sharing of noiseless entanglement. Thus, one may define a metric for
sharing entanglement with some acceptable level of noise. One such metric,
called the one-shot quantum capacity, is roughly the largest fully entangled
state than can be shared across a channel with at most a fixed, but arbitrary
error~\cite{BuscemiDatta10}.
This one-shot capacity, its connection to asymptotic capacities, and method for
understanding and achieving these have been recently
explored~\cite{DattaHsieh11, DattaHsieh13, MatthewsWehner14, BeigiDattaEA16,
TomamichelBertaEA16, PfisterRolEA18, AnshuJainEA19, WangFangEA18,
SalekAnshuEA20, NakataWakakuwaEA21, KhatriWilde20}.
However, we don't fully understand notions of additivity for this capacity;
ways of computing and explicit protocols for achieving the one-shot capacity
are not completely known.

A key metric in the one-shot setting is the highest fidelity between the state
shared across the channel and a maximally entangled
state~\cite{BennettBrassardEA96, BennettDiVincenzoEA96}. This fidelity
characterizes optimal performance of various teleportation based
tasks~\cite{HorodeckiHorodeckiEA99}.  The optimal fidelity between a pure
entangled state shared across the channel and a maximally entangled state is
known~\cite{VerstraeteVerschelde03, PalBandyopadhyay18}.  Surprisingly, the
optimal pure state input need not be maximally entangled which is consistent
with fidelity not being an entanglement monotone.

The one-shot setting is augmented by post-processing using one round of local
operations and two-way classical
communication~(2-LOCC)~\cite{VerstraeteVerschelde03a, BandyopadhyayGhosh12,
PalBandyopadhyayEA14, PalBandyopadhyay18}. However, in this setting it is
unknown if the optimal fidelity is multiplicative~(analog of additivity in this
setting). There is no known method for computing or explicit protocol for
achieving this optimal fidelity in general.  The only exception is qubit
channels, where optimal protocols use pure state inputs and don't require
2-LOCC~\cite{VerstraeteVerschelde03a, PalBandyopadhyayEA14}. Surprisingly, the
behaviour of such optimal protocols for the simplest of qubit channels is not
fully known. 

One way to understand a metric for sharing entanglement across a specific
channel is to study variation in the metric with the amount of noise in the
channel. Surprisingly, even for the simplest qubit channels, noise parameters
are only partially understood.

\begin{figure}[ht]
    \centering
    \includegraphics[]{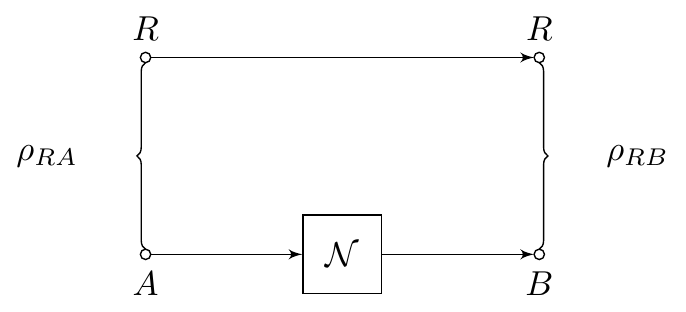}
    \caption{Diagram representing one-shot entanglement passing.}
    \label{fig:setup}
\end{figure}
\newpage

{\bf Results.} In this work, we introduce and solve the problem for sharing
entanglement in a one-shot setting where an arbitrary mixed state $\rho_{RA}$
may be prepared across a reference system $R$ and channel input $A$. This input
is sent via a fixed channel $\NC:A \mapsto B$~(see Fig.~\ref{fig:setup}) to
achieve the maximum fidelity $\OC(\NC)$ between the channel output $\rho_{RB}$
and a maximally entangled state across $R$ and $B$.
We reformulate $\OC(\NC)$ via a semi-definite
program~\cite{VandenbergheBoyd96}.  Our first main result is to express
$\OC(\NC)$ in two useful ways ~(see Sec.~\ref{sec:BasicQ} with
Th.~\ref{th:thOne})
First, using what we define~(in Sec.~\ref{sec:standard}) as a channel's
standard Kraus decomposition, and second, in terms of the operator norm of a
channel's Choi–Jamio\l{}kowski operator. Next, we show the maximum
fidelity $\OC$ is multiplicative~(see Th.~\ref{th:multi}).
Not only can $\OC(\NC)$ be achieved using pure states but, in certain
cases, it can also be achieved using a variety of mixed states. We give a
recipe to construct these pure and mixed states.
For all extremal~(see definition below eq.~\eqref{eq:qChanCVX}) qubit channels,
we compute optimal inputs and the minimum amount of entanglement $\EC$ required
to create these inputs. 
Somewhat surprisingly, the minimum entanglement
$\EC$ is found to be discontinuous in these noise parameters. Typically, $\EC$
is less than its maximal value of one, but $\OC$ is high enough for the channel
to be useful for teleportation, even if the channel has no quantum
capacity~(see Sec.~\ref{sec:glnQbitNoise}).
For very noisy qubit Pauli channels we find separable inputs that achieve the
same fidelity as maximally entangled ones found previously~(see
Sec.~\ref{sec:pauliCase}).
We also find optimal inputs for qutrit channels have a much richer structure
than qubit channels~(see Sec.~\ref{sec:qtrit}).
Noise parameters for general channels remain unknown. We find rigorous noise
parameters for all extremal qubit channels~(see Sec.~\ref{sec:extremeQubit}), a
result which may be of independent interest.
Unlike other metrics in settings for entanglement sharing, $\OC$ is
multiplicative. Thus, even when a channel $\NC$ is used multiple times,
possibly with other channels, its maximum fidelity $\OC(\NC)$ fully
characterizes its ability for sharing high fidelity entanglement
without post-processing. 
Our results also give rigorous lower bounds on entanglement fidelities that can
be achieved when allowing for multiple rounds of 2-LOCC. These bounds are tight
for one round of 2-LOCC using qubit channels. Characterization of the noise
parameters for all extremal qubit channels presented here pave the way for a
stronger understanding of quantum channels and quantum protocols across
channels.

\section{Preliminaries}

Let $\xB$ denote a vector in $n$-dimensional real space, $\Rbb^n$, $\xB_i$
denote the $(i+1)^{\text{th}}$ coordinate of $\xB$, and $|\xB_i|$ denote its
absolute value. Coordinates of $\xB$ rearranged in decreasing order give
$\xB^{\downarrow}$, a vector satisfying $\xB^{\downarrow}_0 \geq
\xB^{\downarrow}_1 \geq \dots \geq \xB^{\downarrow}_{n-1}$. Euclidean norm of
$\xB$, $|\xB| := \sqrt{\sum_i \xB^2_i}$.
Let $\ket{\psi}$ denote a ket in a Hilbert space $\HC$ of finite dimension $d$
and $| \ket{\psi} | := \sqrt{\inp{\psi}}$ denote its norm. A pure quantum
state is represented by a ket with unit norm.
Let $\LC(\HC)$ denote the space of linear operators on $\HC$. For any two
quantum states $\ket{\psi}$ and $\ket{\phi}$, the dyad $\dyad{\psi}{\phi} \in
\LC(\HC)$ and the projector onto $\ket{\psi}, \dya{\psi} \in \LC(\HC)$. The
Frobenius inner product between two operators $N$ and $O$ in $\LC(\HC)$,
\begin{equation}
    \inpV{N}{O} := \Tr (N^{\dag} O),
    \label{eq:frob}
\end{equation}
where $N^{\dag}$ represents the adjoint~(conjugate transpose) of $N$.  
A Hermitian operator $H \in \LC(\HC)$, satisfying $H=H^{\dag}$, represents
an observable. This operator has an eigendecomposition,
\begin{equation}
    H = \sum_i \xB_i \dya{\psi_i},
    \label{eq:spectral}
\end{equation}
where $\xB_i \in \Rbb$ is an eigenvalue of $H$ corresponding to eigenvector
$\ket{\psi_i}$ and the collection of eigenvectors $\{\ket{\psi_i}\}$ form an
orthonormal basis of $\HC$, $\inpd{\psi_i}{\psi_j} = \dl_{ij}$, where
$\dl_{ij}$ is the Kronecker delta function. Support of $H$ is the subspace
spanned by its eigenvectors with non-zero eigenvalues. 
In~\eqref{eq:spectral}, if $\xB_i \geq 0$ for all $i$, then we say $H$ is
positive semi-definite~(PSD), $H \succeq 0$.
An optimization, over PSD matrices, of the form
\begin{align}
    \begin{aligned}
        \text{maximize} \; & \Tr(A_0H) & \\
        \text{subject to} \; & \Tr(A_iH) = c_i, & \forall \; 1 \leq i \leq n, \\
        \text{and} \; & H \succeq 0,
    \end{aligned}
\end{align}
where $A_i$ are Hermitian, is called a semi-definite program~(see Sec.1.2.3
in~\cite{Watrous18} and citations to and within~\cite{VandenbergheBoyd96}).
Square root of a PSD operator $H$, $\sqrt{H}$, is obtained by replacing
$\xB_i$ in~\eqref{eq:spectral} with $\sqrt{\xB_i}$.  For any operator $O \in
\LC(\HC)$,
\begin{equation}
    || O || := \underset{| \ket{\psi} | \leq 1}{\max} | O \ket{\psi} |,
    \quad
    || O ||_1 := \Tr(\sqrt{OO^{\dag}}), \quad \text{and} \quad
    || O ||_2 := \sqrt{\Tr(OO^{\dag})}, 
    \label{eq:Norm}
\end{equation}
denote the spectral norm, the trace norm, and the Frobenius norm, respectively.
For $H$ in~\eqref{eq:spectral}, 
\begin{equation}
    ||H|| = |\xB^{\downarrow}_0|, 
    \quad
    ||H||_1 = \sum_i |\xB_i|, \quad \text{and} \quad 
    ||H||_2 = |\xB|.
    \label{eq:HNorm}
\end{equation}

A density operator $\rho \in \LC(\HC)$ is a positive semi-definite operator
with unit trace, $\Tr(\rho) = 1$, it represents a mixed quantum state. 
Its von-Neumann entropy,
\begin{equation}
    S(\rho) = - \Tr (\rho \log \rho),
\end{equation}
where $\log$ is base $2$.
The fidelity between two density operators $\rho$ and $\sg$,
\begin{equation}
    F(\rho,\sg) := || \sqrt{\rho} \sqrt{\sg} ||_1.
    \label{eq:Fidelity}
\end{equation}

Let $\HC_A$ and $\HC_B$ be two Hilbert spaces of dimensions $d_A$ and $d_B$,
respectively, and $\HC_{AB}$ denote the tensor product space $\HC_A \ot \HC_B$.
Given a pure state $\ket{\psi}_{AB} \in \HC_{AB}$, density operators
\begin{equation}
    \psi_A = \Tr_B (\dya{\psi}) \quad \text{and} \quad
    \psi_B = \Tr_A (\dya{\psi})
\end{equation}
denote the partial trace of $\dya{\psi}$ over $\HC_B$ and $\HC_A$,
respectively.  The entanglement of formation of a pure state $\ket{\psi}_{AB}$,
\begin{equation}
    E_f(\ket{\psi}_{AB}) = S(\psi_A),
    \label{eq:entEntro0}
\end{equation}
and for a mixed state $\rho_{AB}$,
\begin{equation}
    E_f(\rho_{AB}) = \text{min} \; \sum_i p_i E_f(\ket{\psi_i}_{AB}),
    \label{eq:entEntro}
\end{equation}
is the minimum average entanglement $E_f$ over all pure state decompositions,
$\rho_{AB} = \sum_i p_i \dya{\psi_i}$, $p_i \geq 0$ and $\sum_i p_i = 1$.

Let $\AC = \{ \ket{a_i}\}$ and $\BC = \{ \ket{b_j}\}$ be orthonormal bases,
of $\HC_A$ and $\HC_B$, respectively, i.e., 
\begin{equation}
    \inpd{a_i}{a_{j}} = \inpd{b_i}{b_j} = \dl_{ij}.
    \label{eq:abBasis}
\end{equation}
Using these bases $\AC$ and
$\BC$ we can represent any linear operator $L : \HC_A \mapsto \HC_B$ as a
matrix,
\begin{equation}
    L = \sum_{ij} [L]_{ij} \dyad{b_i}{a_j},
\end{equation}
with elements $[L]_{ij}$. We can define two basis dependent linear maps, 
\begin{equation}
    L^* = \sum_{ij} [L]_{ij}^* \dyad{b_i}{a_j}, \quad \text{and} \quad
    L^T = \sum_{ij} [L]_{ij} \dyad{a_j}{b_i},
    \label{eq:trConj}
\end{equation}
representing complex conjugate and transpose, respectively. In contrast to
$L^*$ and $L^T$, the adjoint $L^{\dag} = (L^*)^T = (L^T)^*$ is basis
independent.
If $\HC_A$ and $\HC_B$ have the same dimension $d$, then one can choose $\AC$
and $\BC$ to be the same, say the standard basis $\{\ket{i}\}$, and
construct an identity map $I_{A \gets B}: \HC_B \mapsto \HC_A$,
\begin{equation}
    I_{A \gets B} \ket{i}_{B} = \ket{i}_A.
    \label{eq:idOp}
\end{equation}
This subscript notation $A \gets B$ is dropped shortly after defining how the
identity map above is used to map a ket $\ket{\phi}_B \in \HC_B$, an operator
$O_B \in \LC(\HC_B)$ and part of an operator $L_{AB} \in \LC(\HC_{AB})$ to 
\begin{equation}
    \ket{\phi}_A = I_{A \gets B}\ket{\psi}_B, \quad
    O_A = I_{A \gets B} O_{B} I_{B \gets A}, \quad \text{and} \quad
    L_{AA} = (I_{A \gets B} \ot I_A) L_{BA} (I_{B \gets A} \ot I_A),
\end{equation}
respectively, here $I_A$ is identity on the $\HC_A$ space. Later, these
mappings are done implicitly by simply replacing the subscripts in an obvious
way.

\subsection{Operator-Ket duality}
\label{sec:opKetD}

{\em Operator-ket} duality is the concept of fixing an orthonormal basis $\AC =
\{\ket{a_i}\}$ of $\HC_A$ and using an un-normalized maximally entangled
state on $\HC_A \ot \HC_A$,
\begin{equation}
    \ket{\gm}_{AA} = \sum_{i} \ket{a_i} \ot \ket{a_i},
    \label{eq:maxEntState}
\end{equation}
to associate with any linear operator $K : \HC_A
\mapsto \HC_B$ a ket, $\ket{\psi}_{AB} = (I_A \ot K) \ket{\gm}$,
obtained by acting $K$ on one-half of $\ket{\gm}$. Conversely, for fixed
orthonormal basis $\AC$, one associates with any ket $\ket{\psi}_{AB}$, a linear
operator
\begin{equation}
    K = \sum_i \dyad{\chi_i}{a_i}, \quad \text{where} \quad
    \ket{\chi_i}_B = (\bra{a_i}_A \ot I_B) \ket{\psi}_{AB}.
    \label{eq:mapKet2}
\end{equation}
In analogy to the discussion above, fixing an orthonormal basis $\BC
= \{\ket{b_j}\}$ of $\HC_B$ one associates with the ket $\ket{\psi}_{AB}$ an
operator $L : \HC_B \mapsto \HC_A$.  This operator $L = K^T$ where the
transpose operation is taken using basis $\AC$ and $\BC$ as described
in~\eqref{eq:trConj}.

In what follows, we use the notation $\ket{K} \in \HC_{AB}$ for a ket
associated with the operator $K : \HC_A \mapsto \HC_B$ through the operator-ket
duality above where basis $\AC$ is fixed. This ket and operator pair satisfy
\begin{equation}
    \ket{K}_{AB} = (I \ot K) \ket{\gm}_{AA}.
    \label{eq:mapKet}
\end{equation}
For any two maps $K$ and $K'$ from $\HC_A$ to $\HC_B$ and associated kets
$\ket{K}_{AB}$ and $\ket{K'}_{AB}$, respectively, one can show that
\begin{equation}
    \inpV{K}{K'} = \inpd{K}{K'}.
    \label{eq:inpDual}
\end{equation}
Using the orthonormal basis $\BC$ of $\HC_B$, one can associate with $K^{\dag}:
\HC_B \mapsto \HC_A$ the ket $\ket{K^{\dag}}_{BA}$. In this ket, swapping the
spaces $\HC_A$ and $\HC_B$~(see discussion below~\eqref{eq:idOp}) gives
$\ket{K^{\dag}}_{AB}$ which then satisfies
\begin{equation}
    \ket{K^{\dag}}_{AB} = \ket{K}_{AB}^*
    \label{eq:inpDual2}
\end{equation}
where complex conjugation of any ket $\ket{\chi}_{AB} = \sum_{ij}c_{ij}
\ket{a_i} \ot \ket{b_j}$, is defined using basis $\AC$ and $\BC$ as
$\ket{\chi}_{AB}^* = \sum_{ij} c^*_{ij} \ket{a_i} \ot \ket{b_j}$.

\section{Quantum channels}
\label{sec:qChan}

Let $\HC_A, \HC_B,$ and $\HC_C$ be three Hilbert spaces and $V: \HC_A \mapsto
\HC_B \ot \HC_C$ be an isometry, i.e., $V^{\dag} V = I_A$. This isometry
defines a pair of quantum channels $\NC$ and $\NC^c$, i.e., a pair of
completely positive trace preserving~(CPTP) maps with superoperators
\begin{equation}
    \NC(O) = \Tr_C (V O V^{\dag}) \quad \text{and} \quad
    \NC^c(O) = \Tr_B (V O V^{\dag}),
\end{equation}
taking $O \in \LC(\HC_A)$ to $\LC(\HC_B)$ and $\LC(\HC_C)$, respectively.  The
quantum channel $\NC$ is called degradable and $\NC^c$ anti-degradable if there
exists a quantum channel $\DC$ such that $\DC \circ \NC =
\NC^c$~\cite{DevetakShor05}.

Let $\IC_A$ be the identity map from $\LC(\HC_A)$ to itself.  Using an
un-normalized maximally entangled state $\ket{\gm}_{AA}$~\eqref{eq:maxEntState}
we define the Choi–Jamio\l{}kowski~\cite{Jamiokowski72, Choi75} operator of the
linear map $\NC$ as
\begin{equation}
    J^{\NC}_{AB} = \IC_A \ot \NC (\dya{\gm}) = 
    \sum_{ij} \dyad{a_i}{a_j} \ot \NC(\dyad{a_i}{a_j}).
    \label{eq:cj}
\end{equation}
This operator contains all information about $\NC$. For instance,
\begin{equation}
    \NC(\dyad{a_i}{a_j}) = 
    (\bra{a_i} \ot I_B ) J^{\NC}_{AB} (\ket{a_j} \ot I_B),
    \label{eq:cj2}
\end{equation}
$\NC$ is completely positive~(CP) if and only if $J^{\NC}_{AB}$ is positive
semi-definite, and 
\begin{equation}
    \Tr_B(J^{\NC}_{AB}) = I_A
\end{equation}
if and only if $\NC$ is trace-preserving, $\Tr \big( \NC(O) \big) =
\Tr(O)$ for all $O$. 
Equivalently, a linear map $\NC: \LC(\HC_A) \mapsto \LC(\HC_B)$ is CP
if and only if it can be written in the form
\begin{equation}
    \NC(O) = \sum_i K_i O K_i^{\dag},
    \label{eq:kraus}
\end{equation}
where $K_i : \HC_A \mapsto \HC_B$ is a linear operator, and the collection
$\{K_i\}$ are called Kraus operators. The map in~\eqref{eq:kraus} is trace
preserving when these Kraus operators satisfy $\sum_i K_i^{\dag} K_i = I_A$.
When $\NC$ is unital, i.e., $\NC(I_A) = I_B$, the Kraus operators satisfy
$\sum_i K_i K_i^{\dag} = I_B$.  If $\HC_A$ and $\HC_B$ have the same dimension,
then they are isomorphic to one another and can be denoted by $\HC$. If these
Kraus operators on $\HC$ are Hermitian operators~(or normal operators) then the
channel is automatically unital.

\subsection{A standard Kraus decomposition}
\label{sec:standard}

For a given channel $\NC: \LC(\HC_A) \mapsto \LC(\HC_B)$, the set of Kraus
operators is not unique. However, one can construct what can be called a {\em
standard Kraus decomposition} with some pleasing properties. 

Consider the eigendecomposition of the Choi–Jamio\l{}kowski operator
in~\eqref{eq:cj},
\begin{equation}
    J^{\NC}_{AB} = \sum_i \eB^{\downarrow}_i \dya{L_i},
    \label{eq:choi2}
\end{equation}
where eigenvalues $\eB^{\downarrow}_0 \geq \eB^{\downarrow}_1 \geq \dots \geq
\eB^{\downarrow}_{d_Ad_B-1}\geq 0$ and eigenvectors $\{ \ket{L_{i}} \}$ form an
orthonormal basis of $\HC_{AB}$,
\begin{equation}
    \inpd{L_{i}}{L_{j}} = \dl_{ij}.
    \label{eq:good}
\end{equation}
Applying operator-ket duality using orthonormal basis $\AC = \{\ket{a_i}\}$ to
kets $\{ \ket{L_{i}} \}$ results in a collection of orthonormal operators $\{
    L_{i} \}$ that map $\HC_{A}$ to $\HC_B$~(see Sec.~\ref{sec:opKetD}). Using
these operators define $K_{i}: \HC_A \mapsto \HC_B$,
\begin{equation}
    K_{i}:= \sqrt{\eB^{\downarrow}_i} L_{i},
    \label{eq:KijDef}
\end{equation}
and notice from operator-ket duality we get
\begin{equation}
    \ket{K_{i}}:= \sqrt{\eB^{\downarrow}_i} \ket{L_{i}}.
    \label{eq:KijDef2}
\end{equation}
\begin{lemma}
    \label{lm:stdKraus}
Operators $\{ K_{i} \}$ form a Kraus decomposition of $\NC$,
\begin{equation}
    \NC(O) = \sum_{i} K_{i} O K_{i}^{\dag}.
\end{equation}
\end{lemma}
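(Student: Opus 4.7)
The plan is to identify $\NC$ with the CP map $\NC'(O) := \sum_{i} K_{i} O K_{i}^{\dag}$ by showing their Choi--Jamio\l{}kowski operators coincide, and then invoking the fact that a CP map is uniquely determined by its Choi operator (which follows from the recovery formula~\eqref{eq:cj2}). Since the $K_i$ are just explicit operators on $\HC_A \to \HC_B$, the map $\NC'$ is manifestly CP, so no positivity check is needed; the content of the lemma is the numerical equality $J^{\NC'} = J^{\NC}$.

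The first step is to compute $J^{\NC'}$ directly from~\eqref{eq:cj}. By linearity and the definition of the Choi operator,
\begin{equation}
    J^{\NC'}_{AB} \;=\; \IC_A \ot \NC'(\dya{\gm}) \;=\; \sum_{i} (I_A \ot K_i)\,\dya{\gm}\,(I_A \ot K_i^{\dag}).
\end{equation}
The second step is to recognize the operator-ket duality relation~\eqref{eq:mapKet}, namely $\ket{K_i}_{AB} = (I_A \ot K_i)\ket{\gm}_{AA}$, which lets us rewrite the previous expression as $\sum_i \dya{K_i}_{AB}$. The third step is to substitute~\eqref{eq:KijDef2} to get $\sum_i \dya{K_i} = \sum_i \eB^{\downarrow}_i \dya{L_i}$, which is exactly the spectral decomposition~\eqref{eq:choi2} of $J^{\NC}_{AB}$. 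Hence $J^{\NC'}_{AB} = J^{\NC}_{AB}$.

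The final step is to conclude $\NC' = \NC$. Applying~\eqref{eq:cj2} to both maps gives $\NC'(\dyad{a_i}{a_j}) = \NC(\dyad{a_i}{a_j})$ for every pair of basis elements, and linearity extends this to all $O \in \LC(\HC_A)$.

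I do not anticipate a genuine obstacle here; the lemma is essentially a restatement of the standard correspondence between Kraus decompositions of $\NC$ and PSD decompositions of $J^{\NC}$, specialized to the spectral decomposition. The only thing that requires care is the bookkeeping of operator-ket duality: one must use the same basis $\AC$ to define $\ket{\gm}$, to pass between $L_i$ and $\ket{L_i}$, and in the identification $\HC_A \cong \HC_B$ implicit in~\eqref{eq:mapKet}; otherwise spurious complex conjugates can appear (cf.~\eqref{eq:inpDual2}). Once that is fixed, the argument is three one-line identities strung together.
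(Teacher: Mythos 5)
Your proposal is correct and follows essentially the same route as the paper's proof: both rest on the chain $J^{\NC}_{AB} = \sum_i \eB^{\downarrow}_i \dya{L_i} = \sum_i (I_A \ot K_i)\dya{\gm}(I_A \ot K_i)^{\dag}$ via~\eqref{eq:mapKet} and~\eqref{eq:KijDef}, followed by the recovery formula~\eqref{eq:cj2} and linearity. The only cosmetic difference is that you run the chain from $\NC'$ toward $J^{\NC}$ rather than rewriting $J^{\NC}$ directly, which changes nothing of substance.
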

\begin{proof}
    In~\eqref{eq:choi2} use~\eqref{eq:mapKet} to obtain 
    \begin{align}
        J^{\NC}_{AB} &= \sum_{i} \eB^{\downarrow}_i (I_A \ot L_{i})
        \dya{\gm} (I_A \ot L_{i})^{\dag}\\
        &= \sum_{i} (I_A \ot K_{i}) \dya{\gm} (I_A \ot K_{i})^{\dag},
    \end{align}
    where the second inequality uses~\eqref{eq:KijDef}. This second inequality,
    together with~\eqref{eq:cj2} gives,
    \begin{equation}
        \NC(\dyad{a_k}{a_l}) = \sum_{i} K_{i} (\dyad{a_k}{a_l}) K_{i}^{\dag} 
    \end{equation}
    This equality, together with linearity of $\NC$ proves this lemma.
\end{proof}
Using~\eqref{eq:choi2}, \eqref{eq:good}, \eqref{eq:KijDef},
and~\eqref{eq:KijDef2} one can show that the Kraus operators $\{ K_{i}\}$
satisfy
\begin{equation}
    \inpV{K_{i}}{K_{j}} = \inpV{K_{i}}{K_{i}} \dl_{ij}
    \quad \text{and} \quad
    \inpV{K_{i}}{K_{i}} \geq \inpV{K_{j}}{K_{j}},
    \label{eq:canKraus}
\end{equation}
where $i \leq j$ and we use $\inpV{K_{i}}{K_{i}} = \eB^{\downarrow}_i$.
In addition to being orthogonal and ordered in the way captured by the above
equation, the Kraus operators $\{K_i\}$ have several other useful properties.
The total number of non-zero operators $\{K_i\}$ is the rank of the
Choi-Jamio\l{}kowsi operator $\JC_{AB}^{\NC}$. This rank is the minimum number
of Kraus operators required to represent the channel $\NC$. When the
eigenvalues of $J_{AB}^{\NC}$ are distinct, the norm $\inpV{K_i}{K_i}$ of each
Kraus operator is simply the $(i+1)^{\text{th}}$ largest eigenvalue of
$\JC_{AB}^{\NC}$.  From these Kraus operators, one can obtain the
Choi-Jamio\l{}kowsi operator~\eqref{eq:cj},
\begin{equation}
    \JC_{AB}^{\NC} = \sum_i \dya{K_i},
    \label{eq:choiCan}
\end{equation}
where we have applied operator-ket duality~(see Sec.~\ref{sec:opKetD}) to
convert operators $K_i : \HC_A \mapsto \HC_B$ to kets $\ket{K_i} \in \HC_{AB}$
using basis $\AC = \{\ket{a_i}\}$. Notice $\ket{K_i}$ is an un-normalized
eigenvectors of $\JC_{AB}^{\NC}$ with eigenvalue $\inpV{K_i}{K_i}$.  We call
$\{K_i\}$ in Lemma~\eqref{lm:stdKraus} to be a standard Kraus decomposition. 

\subsection{Dual channel}
\label{sec:dual}

Given a map $\NC: \LC(\HC_A) \mapsto \LC(\HC_B)$, its dual $\NC^{\dag}:
\LC(\HC_B) \mapsto \LC(\HC_A)$ is defined via~\footnote{This definition of dual
map~\eqref{eq:dualInp}, common in quantum information~(see Def.(6.2)
in~\cite{Holevo12} or below eq.(1.44) in~\cite{Wolf12}), differs from another,
$\inpV{\NC^{\dag}(O)}{\rho} = \inpV{O}{\NC(\rho)}$, found in mathematics
literature.  The two definitions coincide for maps satisfying,
$\NC(\rho^{\dag}) = \big( \NC(\rho) \big)^{\dag}$, but can differ when this
property is not satisfied. For example if $\NC(\rho) = c\rho$, and $c$ complex
then the two definitions give different dual maps.} 
\begin{equation}
    \Tr\big( \NC^{\dag}(O)\rho \big) = \Tr \big( O\NC(\rho) \big),
    \label{eq:dualInp}
\end{equation}
where $\rho \in \LC(\HC_A)$ and $O \in \LC(\HC_B)$.
A quantum channel $\NC$ evolves a quantum state $\rho$ and its dual channel
$\NC^{\dag}$ evolves an observable $O$.  The right side of the above equality
represents the expectation value of the evolved quantum state $\NC(\rho)$ with
respect to a fixed observable $O$ while the left side of the equality gives the
expectation value of a fixed state $\rho$ with respect to the evolved
observable $\NC^{\dag}(O)$.
If $\NC$ is CP and has Kraus decomposition~\eqref{eq:kraus} then $\NC^{\dag}$
is also CP with Kraus operators $\{K_i^{\dag}\}$, and if $\NC$ is
trace-preserving then $\NC^{\dag}$ is unital~(see Ch.6
in~\cite{Holevo12}).
A CP map $\NC$ with standard Kraus operators $\{K_i\}$ has dual map
$\NC^{\dag}$ with standard Kraus operators $\{K_i^{\dag}\}$ since
\begin{equation}
    \inpV{K_i^{\dag}}{K_j^{\dag}} = (\inpV{K_i}{K_j})^*.
    \label{eq:choiDcan}
\end{equation}
The Choi-Jamio\l{}kowsi operator~\eqref{eq:cj} of the dual channel,
\begin{equation}
    J^{\NC^{\dag}}_{BA} = \sum_i \dya{K_i^{\dag}},
    \label{eq:choiD}
\end{equation}
where $\{ \ket{K_i^{\dag}}\}$ in $\HC_{BA}$ are defined via operator-ket
duality using basis $\BC = \{ \ket{b_j} \}$. 
We aim to compare $J^{\NC^{\dag}}_{BA}$, an operator on $\HC_{BA}$ with
$J^{\NC}_{AB}$, an operator on a different space $\HC_{AB}$. To carry out the
comparison, interchange
$B$ and $A$ in~\eqref{eq:choiD} and use~\eqref{eq:inpDual2}, \eqref{eq:choiCan}
to get
\begin{equation}
    J^{\NC}_{AB} = (J^{\NC^{\dag}}_{AB})^*.
    \label{eq:choiDchoi}
\end{equation}
The Choi-Jamio\l{}kowsi operator of a channel and its dual can be taken to be
complex conjugates of one another.

\subsection{Extremal qubit channels}
\label{sec:extremeQubit}

The set of quantum channels from $\LC(\HC_A)$ to $\LC(\HC_B)$ is convex, i.e.,
if $\NC$ and $\MC$ are quantum channels then 
\begin{equation}
    \KC = \lm \NC + (1-\lm) \MC,
    \label{eq:qChanCVX}
\end{equation}
is a quantum channel for any $0 \leq \lm \leq 1$. Any quantum channel $\KC$ is
extremal, i.e., it is an extreme point of the set of quantum channels, if
equality of the type~\eqref{eq:qChanCVX} holds only when $\lm=0$ or $\lm=1$, or
the only channels $\NC$ and $\MC$ satisfying the equality both equal $\KC$.

A quantum channel $\NC : \LC(\HC_A) \mapsto \LC(\HC_B)$ is called a qubit
channel when $\HC_A$ and $\HC_B$ are two-dimensional. 
For these two dimensional spaces, we can use the standard basis $\{\ket{i}\}$,
where $i \in \{0,1\}$, to define Pauli operators,
\begin{equation}
    X = \dyad{0}{1} + \dyad{1}{0}, \quad 
    Y = -i\dyad{0}{1} + i \dyad{1}{0}, \quad \text{and} \quad  
    Z = \dya{0} - \dya{1}.
    \label{eq:pauliOp}
\end{equation}

Extreme points of qubit channels are studied in various
works~\cite{FujiwaraAlgoet98, NiuGriffiths99, RuskaiSzarekEA02,
WolfPerezGarcia07, FriedlandLoewy14}. A qubit channel is extremal if it has a
single Kraus operator, given by a unitary operator, or it has two Kraus
operators, each not proportional to a unitary operator~(see Cor.~15
in~\cite{FriedlandLoewy14}). Up to local unitaries at the channel input and
output, a qubit channel $\NC$ with two Kraus operators can be written
as~\cite{RuskaiSzarekEA02}
\begin{equation}
    \NC(O) = K_0 O K_0^{\dag} + K_1 O K_1^{\dag},
    \label{eq:twoKraus}
\end{equation}
where, 
\begin{equation}
    K_0 = \begin{pmatrix}
        \cos (\frac{v-u}{2}) & 0 \\
        0 & \cos (\frac{v+u}{2})  
    \end{pmatrix}, \quad
    K_1 = \begin{pmatrix}
        0 & \sin (\frac{v+u}{2}) \\
        \sin (\frac{v-u}{2}) & 0
    \end{pmatrix},
    \label{eq:simpleKraus}
\end{equation}
are expressed in the standard basis $\{\ket{i}\}$ at $\HC_A$ and $\HC_B$, $u
\in [0, 2\pi]$ and $v \in [0, \pi)$. 

While $u$ and $v$ parametrize the channel~\eqref{eq:twoKraus}, they don't
necessarily represent noise parameters that have a monotonic relationship with
the amount of noise introduced by the channel. In certain special cases, noise
parameters can be arrived at intuitively.  For instance when $u = 0$, 
\begin{equation}
    \NC(O) = \cos^2 (\frac{v}{2}) O + \sin^2 (\frac{v}{2}) X O X,
    \label{eq:dephasing}
\end{equation}
is a qubit dephasing channel with dephasing probability $\sin^2
(v/2)$~\footnote{Notice, the dephasing channel is not extremal since each of
its Kraus operators are proportional to a unitary operator~(see discussion
above~\eqref{eq:twoKraus}).}.
By performing a unitary, $X$, at the input channel input $\HC_A$, this
dephasing channel~\eqref{eq:dephasing} can be converted to another dephasing
channel with dephasing probability $1- \sin^2 (v/2)$. Thus a dephasing
probability of half gives maximum dephasing.
This dephasing probability is an intuitive noise parameter in the sense that as
this probability is increased from zero to a half, the channel becomes noisier. 

Another special case is when $u + v = 2 \pi$. Here, if kets $\ket{0}$ and
$\ket{1}$ are interchanged at the channel input and output, $\NC$ becomes a
qubit amplitude damping channel. The qubit amplitude damping channel fixes
$\dya{0}$ but $\dya{1}$ decays to $\dya{0}$ with probability $\sin^2 v$.
Intuitively, this damping probability is a noise parameter in the sense that as
the damping probability is increased from zero to one, the channel becomes
noisier.
Except for these special cases of dephasing and amplitude damping, suitable
noise parameters are not necessarily easy to guess. 

As discussed above, when $\NC$ represents amplitude damping noise, the noise
parameter is the damping probability.  In all other cases, this qubit channel
$\NC$ can be generated from an isometry~(see discussion in
Sec.~\ref{sec:qChan}) of a special form. An isometry of this {\em pcubed}
form~\cite{SiddhuGriffiths16},
\begin{equation}
    V \ket{\al_{i}} = \ket{\bt_{i}} \ot \ket{\gm_{i}},
    \label{eq:pcubeForm}
\end{equation}
where $i \in \{0, 1\}$, takes some special input {\em pure} states
$\{\ket{\al_i}\}$ that are not necessarily orthogonal but form a basis of
$\HC_A$, to {\em product of pure} states $\{ \ket{\bt_i}\}$ at the $\HC_B$
output and $\{ \ket{\gm_i} \}$ at the $\HC_C$ output.
The Gram matrices $G_A, G_B,$ and $G_C$ of $\{ \ket{\al_i}\}, \{ \ket{\bt_j}\}$
and $\{\ket{\gm_{k}} \}$, respectively, satisfy
\begin{equation}
    [G_A]_{ij} = \inpd{\al_i}{\al_j} = \inpd{\bt_i}{\bt_j} \inpd{\gm_i}{\gm_j} 
    = [G_B]_{ij} [G_C]_{ij}
\end{equation}
if and only if $V$ is an isometry, i.e., $V^{\dag} V =
I_A$~\cite{SiddhuGriffiths16}. These matrices take the form
\begin{equation}
    G_A = \begin{pmatrix} 1 & a \\ a & 1 \end{pmatrix}, \quad
    G_B = \begin{pmatrix} 1 & b \\ b & 1 \end{pmatrix}, \quad \text{and} \quad
    G_C = \begin{pmatrix} 1 & c \\ c & 1 \end{pmatrix},
        \label{eq:grmMat}
\end{equation}
where $-1 < a < 1, -1 \leq b \leq 1,-1 \leq c \leq 1$, and $a = bc$.
The parameters $b$ and $c$ completely specify the isometry 
$V$ in~\eqref{eq:pcubeForm} and thus the channel $\NC$. 
One may parametrize $\ket{\al_i}$ using the standard basis as
\begin{equation}
    \ket{\al_i} = \sqrt{\frac{1 + a}{2}} \ket{0} + (-1)^i \sqrt{\frac{1-a}{2}}
    \ket{1}.
    \label{eq:alKets}
\end{equation}
In this parametrization replacing $a$ with $b$ gives $\ket{\bt_i}$ and
replacing $a$ with $c$ gives $\ket{\gm_i}$.
The parameters $b$ and $c$ are related to $u$ and $v$ in~\eqref{eq:simpleKraus}
as follows,
\begin{equation}
    \sin^2 v = \frac{1 - c^2}{1 - (bc)^2}, 
    \quad \text{and} \quad
    \cos^2 u = \frac{1 - b^2}{1 - (bc)^2},
\end{equation}
where $|bc| \neq 1$. The Kraus operators in~\eqref{eq:simpleKraus} can be
written as
\begin{equation}
    K_0 = \begin{pmatrix} 
            \sqrt{\frac{(1+b)(1+c)}{2(1+bc)}} & 0 \\
            0 & \sqrt{\frac{(1-b)(1+c)}{2(1-bc)}}
          \end{pmatrix} 
  \quad \text{and} \quad
    K_1 = \begin{pmatrix} 
            0 & \sqrt{\frac{(1+b)(1-c)}{2(1-bc)}} \\
            \sqrt{\frac{(1-b)(1-c)}{2(1+bc)}} & 0
          \end{pmatrix}.
    \label{eq:krsNewPara}
\end{equation}
While these Kraus operators look more complicated than those
in~\eqref{eq:simpleKraus}, several other channel properties simplify when using
the parameters $b$ and $c$. 
For instance, the channel $\NC$ with parameters $b$ and $c$ is degradable if
$|b/c| < 1$, otherwise $|b/c| \geq 1$ and the channel is
anti-degradable~\cite{SiddhuGriffiths16}.

In general, $-1 \leq b \leq 1$ and $-1 \leq c \leq 1$, however one can simplify
the parameter space. In the discussion above, replacing $b$ with $-b$ while
keeping $c$ fixed results in a new channel $\tilde{\NC}$ which is equivalent to
$\NC$ up to local unitaries at the channel input and output.
To see this, notice this replacement defines a new isometry $\tilde{V}$ of the
pcubed form,
\begin{equation}
    \tilde{V} \ket{\tilde{\al}_i} =\ket{\tilde{\bt}_i} \ot \ket{\gm_i},
    \label{eq:KTilde}
\end{equation}
where $\ket{\tilde{\al}_i}$ and $\ket{\tilde{\bt}_i}$ are kets obtained from
$\ket{\al_i}$ and $\ket{\bt_i}$~(see definition below eq.~\eqref{eq:alKets}) by
replacing $a$ and $b$ with $-a$ and $-b$, respectively.
This new isometry $\tilde{V}$ is related to $V$ in~\eqref{eq:pcubeForm},
via local unitaries as follows, 
\begin{equation}
    (I_C \ot X_B) \tilde{V} = V X_A,
    \label{eq:KTildeK}
\end{equation}
where $X$ is defined in~\eqref{eq:pauliOp}.
In a similar vein, a channel with parameters $b$ and $c$ is equivalent up to local
unitaries to a channel with parameters $b$ and $-c$. These equivalences allow
us to restrict the parameter space $-1 \leq b \leq 1$ and $-1 \leq c \leq 1$ to 
the positive quadrant $0 \leq b \leq 1$ and $0 \leq c \leq 1$.

We show that any channel $\NC$ with parameters $b$ and $c$ can simulate another
channel $\NC'$ with parameters $b$ and $c' \leq c$, in the sense,
\begin{equation}
    \NC' = \NC \circ \MC,
    \label{eq:composition}
\end{equation}
where $\MC$ is a quantum channel. 
Proof of the above equation is easy to see from a pcubed point of view. Let
$\NC: \LC(\HC_A) \mapsto \LC(\HC_B)$ be generated by the isometry
in~\eqref{eq:pcubeForm}, $\NC': \LC(\HC_A) \mapsto \LC(\HC_B)$ be generated by
an isometry $V':\HC_A \mapsto \HC_B \ot \HC_{C'}$ of the same form as $V$
in~\eqref{eq:pcubeForm}, however
\begin{equation}
    V'\ket{\al'_i} = \ket{\bt_i} \ot \ket{\gm'_i}
\end{equation}
where $c' = \inpd{\gm'_0}{\gm'_1}$ and $a' = \inpd{\al'_0}{\al'_1} = bc'$.
The $\MC: \LC(\HC_A) \mapsto \LC(\HC_A)$ channel in~\eqref{eq:composition} is
generated by an isometry $W : \HC_A \mapsto \HC_A \ot \HC_D$ of the
form~\eqref{eq:pcubeForm} with
\begin{equation}
    W \ket{\al'_{i}} = \ket{\al_i} \ot \ket{\dl_i},
\end{equation}
where $d := \inpd{\dl_0}{\dl_1} = c'/c$ takes values between $0$ and $1$
since $0 \leq c' \leq c$.
The relationship in~\eqref{eq:composition} ensures that $\NC'$ is noisier than
$\NC$. As a result, for fixed $b$, if one decreases $c$ then the channel $\NC$
becomes noisier.

This parameter $c$ captures lack of distinguishability between pure states
being arriving at the environment. If $c$ is decreased, more information flows
to the environment. The no-cloning
theorem~\cite{Park70,Dieks82,WoottersZurek82} indicates that such a flow to the
environment must come at the cost of information flow to the output. Thus $\NC$
becomes noisier with decreasing $c$.
We shall be interested in using $c$ as the noise parameter with $b$ fixed.
In the limiting $b = 0$ case, $\NC$ becomes the qubit dephasing
channel~\eqref{eq:dephasing} with dephasing probability $(1-c)/2$.  Here,
decreasing $c$ from $1$ to $0$ increases the dephasing probability from $0$ to
half.

\section{Optimal entanglement sharing}

\subsection{High fidelity entanglement}
\label{sec:BasicQ}

Consider two parties Alice and Bob, connected by some quantum channel $\NC:
\LC(\HC_A) \mapsto \LC(\HC_B)$ where $\HC_A$ and $\HC_B$ have the same
dimension $d$.  Suppose Alice has access to a second $d$-dimensional system
with Hilbert space $\HC_{R}$. 
What bipartite state $\rho_{RA}$ should Alice prepare such that
sharing with Bob one half of this state across the channel $\NC$ results in a
state $\rho_{RB}$ with highest fidelity $F(\rho_{RB},\phi_{RB})$ to a maximally
entangled state,
\begin{equation}
    \ket{\phi}_{RB} = \frac{1}{\sqrt{d}} \ket{\gm}_{RB},
    \label{eq:maxEnt}
\end{equation}
between reference $\HC_R$ and output $\HC_B$?  The optimal state prepared by
Alice, which we denote by $\Lm_{RA}$, and the maximum fidelity,
\begin{equation}
    \OC(\NC) := F(\Lm_{RB},\phi_{RB}),
    \label{eq:maxFID}
\end{equation}
have been characterized previously in terms of the channel's
Choi-Jamio\l{}kowski operator~\cite{VerstraeteVerschelde03,
PalBandyopadhyayEA14,PalBandyopadhyay18} when $\rho_{RA}$ is pure. 
For possibly mixed $\rho_{RA}$, our reformulation of these results in terms of
the standard Kraus decomposition of a channel and the operator norm of the
channel's Choi-Jamio\l{}kowski operator agree with these previous
characterizations.  We extend these results by finding families of mixed input
states $\Lm_{RB}$ that achieve $\OC(\NC)$. This reformulation and extension is
used later in our study. We begin our reformulation using a semi-definite
program 
\begin{align}
    \label{opt:maxFPhi}
    \begin{aligned}
        \text{maximize} \; & F(\rho_{RB},\phi_{RB}) \\
        \text{subject to} \; & \rho_{RB} = \IC_R \ot \NC (\rho_{RA}), \\
        \; & \rho_{RA} \succeq 0, \\
        \; & \Tr (\rho_{RA}) = 1.
    \end{aligned}
\end{align}
The optimum value of the above program gives $\OC(\NC)$ and the density
operator which achieves this optimum gives $\Lm_{RA}$. The following Theorem
captures the solution to the above problem.
\begin{theorem}
    Given a channel $\NC$ with standard Kraus operators $\{K_i\}$, 
    \begin{equation}
        \OC(\NC) = \frac{1}{d}\inpV{K_0}{K_0} 
        = \frac{1}{d} || J^{\NC}_{RB}|| 
        = F(\Lm_{RB}, \phi_{RB}),
    \end{equation}
    where the input $\Lm_{RA}$ has support in the span of
    $\{\ket{K_i^{\dag}}_{RA}\}$ satisfying $\inpV{K_i}{K_i} = \inpV{K_0}{K_0}$.
    \label{th:thOne}
\end{theorem}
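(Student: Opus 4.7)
My plan is to convert the fidelity maximization into a linear program over density matrices, diagonalize the resulting cost operator using properties of the standard Kraus decomposition, and then read off both the optimum value and the family of optimal inputs.

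First, I would use the fact that $\phi_{RB}$ is pure to reduce the fidelity~\eqref{eq:Fidelity} to the singlet-fraction overlap $\bra{\phi}_{RB}\rho_{RB}\ket{\phi}_{RB}$, making the objective linear in $\rho_{RA}$. Expanding $\rho_{RB}=\sum_k (I_R\ot K_k)\rho_{RA}(I_R\ot K_k^\dag)$ with the standard Kraus operators $\{K_k\}$ and applying operator-ket duality~\eqref{eq:mapKet} to $\ket{\phi}_{RB}=\ket{\gm}_{RB}/\sqrt{d}$, a short check shows $(I_R\ot K_k^\dag)\ket{\phi}_{RB}=\ket{K_k^\dag}_{RA}/\sqrt{d}$ after the identifications below~\eqref{eq:idOp}. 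Hence
\[
\bra{\phi}\rho_{RB}\ket{\phi} \;=\; \tfrac{1}{d}\,\Tr\bigl(\rho_{RA}\,M\bigr),\qquad M \;:=\; \sum_k \dya{K_k^\dag}_{RA} \;=\; J^{\NC^\dag}_{RA},
\]
the second equality being~\eqref{eq:choiD} after relabeling $B\leftrightarrow R$.

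Second, I would diagonalize $M$. Combining the orthogonality~\eqref{eq:canKraus} with~\eqref{eq:inpDual} and~\eqref{eq:choiDcan}, the kets $\{\ket{K_k^\dag}\}$ are mutually orthogonal with squared norms $\inpV{K_k}{K_k}=\eB^\downarrow_k$, so the sum above is already an eigendecomposition. The largest eigenvalue is $\inpV{K_0}{K_0}$, with eigenspace $\SC$ spanned by those $\ket{K_k^\dag}$ whose norm attains the maximum. Since $\Tr(\rho_{RA}\,M)\le\|M\|$ for any density matrix, with equality precisely when $\supp(\rho_{RA})\subseteq\SC$, the maximum overlap is $\inpV{K_0}{K_0}/d$ and the achievers form a full face of density operators supported in $\SC$ (so they may be mixed). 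This establishes the first and last equalities of the theorem together with the characterization of $\Lm_{RA}$.

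Finally, the Choi-norm form $\tfrac{1}{d}\|J^{\NC}_{RB}\|$ would follow from~\eqref{eq:choiDchoi}, which gives $J^{\NC}_{AB}=(J^{\NC^\dag}_{AB})^*$, so complex conjugation leaves the spectrum and hence the spectral norm invariant after the appropriate relabeling. The only real obstacle is careful bookkeeping between $\HC_R$, $\HC_A$, and $\HC_B$ and the basis choices used when toggling between $\NC$ and $\NC^\dag$ via operator-ket duality; once these identifications are aligned, the argument is a routine linear-algebraic calculation.
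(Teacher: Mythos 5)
Your proposal is correct and follows essentially the same route as the paper: both reduce the fidelity to the linear objective $\tfrac{1}{d}\Tr\bigl(\rho_{RA}\,J^{\NC^{\dag}}_{RA}\bigr)$, identify the optimum with the top eigenvalue of $J^{\NC^{\dag}}_{RA}$ via the orthogonality of the standard Kraus kets, and characterize $\Lm_{RA}$ by its support in the top eigenspace. The only difference is cosmetic: you obtain the cost operator by expanding $\rho_{RB}$ in Kraus operators and pushing $K_k^{\dag}$ onto $\ket{\phi}_{RB}$, whereas the paper invokes the dual channel $(\IC_R\ot\NC)^{\dag}$ directly.
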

\begin{proof}
    Using eq.~\eqref{eq:Fidelity} along with the fact that $\phi_{RB}$ is a
    pure state, one writes $F(\rho_{RB}, \phi_{RB})$ as an inner product
    $\inpV{\rho_{RB}}{\phi_{RB}}$.  This inner product is re-written as $\inpV{
        \IC_R \ot \NC (\rho_{RA})}{\phi_{RB}}$ using the first equality
    constraint in~\eqref{opt:maxFPhi}.  This re-writing can be reduced to
    $\inpV{\rho_{RA}}{ (\IC_R \ot \NC)^{\dag} (\phi_{RB})}$ using
    definition~\eqref{eq:dualInp} of the dual channel. Using discussion
    below~\eqref{eq:dualInp}, or otherwise, one can show that the dual of the
    tensor product of two channels is the tensor product of the dual of
    individual channels. Thus $\inpV{\rho_{RA}}{ (\IC_R \ot \NC)^{\dag}
    (\phi_{RB})} = \inpV{\rho_{RA}}{ \IC_R \ot \NC^{\dag} (\phi_{RB})}$, where
    we used that fact that $\IC_R^{\dag}$ is $\IC_R$. Next, notice $(\IC_R \ot
    \NC^{\dag} ) \phi_{RB}$ is just $\JC_{RA}^{\NC^{\dag}}/d$~\eqref{eq:cj}.
    Using these observations, re-write~\eqref{opt:maxFPhi} as
    \begin{align}
        \label{opt:maxFPhi-2}
        \begin{aligned}
            \text{maximize} \; & 
            \frac{1}{d}\inpV{\rho_{RA}}{\JC_{RA}^{\NC^{\dag}}} \\
            \text{subject to} \; & \rho_{RA} \succeq 0, \\
            \; & \Tr (\rho_{RA}) = 1.
        \end{aligned}
    \end{align}
    Solution to this semi-definite program is $(1/d)$ times the maximum
    eigenvalue of $J^{\NC^{\dag}}_{RA}$ obtained by setting $\rho_{RA} =
    \Lm_{RA}$ where $\Lm_{RA}$ is any density operator with support on the
    eigenspace of this maximum eigenvalue. 
    This largest eigenvalue can be written as $\inpV{K_0^{\dag}}{K_0^{\dag}} =
    \inpV{K_0}{K_0}$ using~\eqref{eq:canKraus} and \eqref{eq:choiDcan}. The
    largest eigenvalue can also be written as the spectral norm, $||
    J^{\NC}_{RB} ||$, by applying definition~\eqref{eq:HNorm}.
    The support of the largest eigenvalue, $\inpV{K_0}{K_0}$, of
    $J^{\NC^{\dag}}_{RA}$ is the span of the collection of eigenvectors
    corresponding to this value. This collection contains eigenvectors
    $\ket{K_i^{\dag}}$ of $J^{\NC^{\dag}}_{RA}$~(see~\eqref{eq:choiD}) with
    eigenvalue $\inpV{K_i^{\dag}}{K_i^{\dag}}$ equaling the largest eigenvalue
    $\inpV{K_0^{\dag}}{K_0^{\dag}}$. The eigenvalues of $J^{\NC^{\dag}}_{RA}$
    can be shown to equal corresponding eigenvalues of $J^{\NC}_{RA}$
    using~\eqref{eq:choiDchoi}, i.e., one can show that
    $\inpV{K_i^{\dag}}{K_i^{\dag}} = \inpV{K_i}{K_i}$.
\end{proof}

The fidelity between a fixed state $\rho_{AB}$ and a fully entangled state,
maximized over all possible fully entangled states is called the {\em fully
entangled fraction}~\cite{BennettDiVincenzoEA96, HorodeckiHorodeckiEA99}
\begin{equation}
    F_e(\rho_{AB}) =  \underset{{U_A}}{\text{max}} \; 
    F\big( \rho_{AB}, (U_A \ot I_B) \phi_{AB} (U_A \ot I_B)^{\dag} \big),
    \label{eq:fullEntFrac}
\end{equation}
where $U_A$ is a unitary operator on $\HC_A$. 

\begin{lemma}
    The {\em largest} fully entangled fraction obtained by sending one half of
    a mixed state $\rho_{RA}$ across the channel $\NC$, maximized over all
    $\rho_{RA}$ equals $\OC(\NC)$~\eqref{eq:maxFID}.
\end{lemma}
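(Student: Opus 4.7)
The statement is essentially that enlarging the target from a single maximally entangled state $\phi_{RB}$ to the full orbit of maximally entangled states under reference-side unitaries cannot improve the optimum, because any such rotation on the reference can be undone by a corresponding rotation on the input. My plan is to prove equality by showing both inequalities.

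\textbf{Easy direction.} First I would establish $\max_{\rho_{RA}} F_e\big(\IC_R \ot \NC(\rho_{RA})\big) \geq \OC(\NC)$. This is immediate from the definition~\eqref{eq:fullEntFrac}: choosing $U_R = I_R$ in the maximization over unitaries gives $F_e(\rho_{RB}) \geq F(\rho_{RB}, \phi_{RB})$. Taking $\rho_{RA} = \Lm_{RA}$ from Theorem~\ref{th:thOne} then yields the lower bound $\OC(\NC)$.

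\textbf{Reverse direction.} For the other inequality, I would exploit the fact that the reference-side unitary $U_R$ commutes with the channel action $\IC_R \ot \NC$, together with unitary invariance of fidelity. Concretely, for any input $\rho_{RA}$ and any unitary $U_R$, set $\rho'_{RA} = (U_R^{\dag} \ot I_A) \rho_{RA} (U_R \ot I_A)$. Then
\begin{align*}
    F\big(\IC_R \ot \NC(\rho_{RA}),\, (U_R \ot I_B)\phi_{RB}(U_R \ot I_B)^{\dag}\big)
    &= F\big((U_R^{\dag} \ot I_B)\,\IC_R \ot \NC(\rho_{RA})\,(U_R \ot I_B),\, \phi_{RB}\big) \\
    &= F\big(\IC_R \ot \NC(\rho'_{RA}),\, \phi_{RB}\big),
\end{align*}
where the first line uses unitary invariance of the trace norm in~\eqref{eq:Fidelity} and the second line uses $(U_R^{\dag} \ot I_B)(\IC_R \ot \NC) = (\IC_R \ot \NC)(U_R^{\dag} \ot I_A)$. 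Since $\rho'_{RA}$ is also a valid density operator and every such $\rho'_{RA}$ arises this way, the joint maximization over $U_R$ and $\rho_{RA}$ reduces to a maximization over $\rho'_{RA}$ of $F(\IC_R \ot \NC(\rho'_{RA}), \phi_{RB})$, which by definition is $\OC(\NC)$.

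\textbf{Obstacle.} There is no serious obstacle: the result is essentially bookkeeping on unitary covariance of the setup. The only subtle point is being careful that the unitary invariance of fidelity applies symmetrically to both arguments (which it does, since $F$ is defined through $\|\sqrt{\rho}\sqrt{\sg}\|_1$), and that $U_R$ can be commuted past $\IC_R \ot \NC$ only because the rotation acts on the reference (which $\NC$ does not touch). Combining both bounds gives the claimed equality.
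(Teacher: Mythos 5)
Your proof is correct and follows essentially the same route as the paper: both arguments absorb the reference-side unitary $U_R$ into the input state via unitary invariance of fidelity and the fact that $U_R \ot I$ commutes past $\IC_R \ot \NC$, reducing the joint optimization over $(U_R, \rho_{RA})$ to the original problem~\eqref{opt:maxFPhi}. Your explicit split into two inequalities is just a slightly more careful presentation of the same idea.
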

\begin{proof}
Notice that the largest fully entangled fraction can be found by modifying the
    optimization problem~\eqref{opt:maxFPhi} as follows: replace $\phi_{RB}$
    with $\chi_{RB} = (U_R \ot I_B) \phi_{RB} (U_R \ot I_B)^{\dag}$ and
    optimize over both unitary matrices $U_R$ and density operators
    $\rho_{RA}$. Notice, in this larger optimization problem, one can simplify
    the objective function $F(\rho_{RB}, \chi_{RB}) = F(\rho'_{RB}, \phi_{RB})$
    where $\rho_{RB}' = (U_R \ot I_B)^{\dag} \rho_{RB} (U_R \ot I_B)$.  Since
    $\rho_{RB}' = \IC \ot \NC (\rho_{RA}')$, where $\rho_{RA}' = (U_R \ot
    I_A)^{\dag} \rho_{RA} (U_R \ot I_A)$, one can rephrase this optimization at
    hand purely in terms of a single new variable $\rho_{RA}'$, satisfying
    $\rho_{RA}' \succeq 0$ and $\Tr(\rho_{RA}') = 1$. In this rephrasing
    variable $U_R$ no longer participates. However the new problem in terms of
    $\rho_{RA}'$ is identical to~\eqref{eq:maxFID}.
\end{proof}

The above result generalizes to mixed state what was implicitly found for pure
states in the proof of Lemma~2 in~\cite{PalBandyopadhyay18}.

Let $\Lm_{RA}$ be the state in Th.~\ref{th:thOne}. We are interested in the
minimum amount of entanglement over all states of this type. To capture this
minimum, we use entanglement of formation~\eqref{eq:entEntro}. When 
$\Lm_{RA}$ is a unique pure state we write the {\em input entanglement}
\begin{equation}
    \EC(\NC) = S(\sg_A),
    \label{eq:inpEnt1}
\end{equation}
when $\Lm_{RA}$ can be chosen to be mixed, we write
\begin{equation}
    \EC(\NC) = \underset{ \Lm_{RA} }{ \text{min} } \; E_f(\Lm_{RA}),
    \label{eq:inpEnt2}
\end{equation}
where $\Lm_{RA}$ are states in Th.~\ref{th:thOne}.  When $\Lm_{RA}$ can be
chosen to be a separable state, $\EC(\NC) = 0$.

\subsection{Multiplicativity}
\label{sec:addtvty}

Suppose Alice and Bob are connected by two independent channels, that may be
same or different. What state should Alice prepare such that sending one half
of it across the joint channel results in Alice and Bob sharing a joint state
with maximum fidelity to a fully entangled state?
What is this maximum fidelity? 
Can one hope to use correlations across the two channels connecting Alice and
Bob to get more fidelity than what can be achieved without using any
correlation across the channels?
Variants of these natural questions have been asked about transmission of
information across asymptotically many uses of quantum channels. Those
questions have been hard to answer. Here we mathematically formulate and answer
the questions we posed above.

Let the two channels connecting Alice and Bob be $\NC_1: \LC(\HC_{A1}) \mapsto
\LC(\HC_{B1})$ and $\NC_2: \LC(\HC_{A2}) \mapsto \LC(\HC_{B2})$, here $d_{A1} =
d_{B1}$ and $d_{A2}= d_{B2}$. For each channel input $\HC_{A1}$ and $\HC_{A2}$,
define auxiliary spaces $\HC_{R1}$ and $\HC_{R2}$. Let $\IC_{R1}$ and
$\IC_{R2}$ be identity maps on these auxiliary spaces, $\LC(\HC_{R1})$ and
$\LC(\HC_{R2})$, respectively, $\HC_{A} := \HC_{A1} \ot \HC_{A2}, \HC_{B} :=
\HC_{B1} \ot \HC_{B2}, \HC_R = \HC_{R1} \ot \HC_{R2}, \NC = \NC_1 \ot \NC_2$,
and $\IC_R = \IC_{R1} \ot \IC_{R2}$.
If Alice prepares a state which does not correlate inputs to the two channels
$\IC_{R1} \ot \NC_1$ and $\IC_{R2} \ot \NC_2$ then the maximum fidelity with a
fully entangled state across auxiliary space $\HC_R$ and the channel output
$\HC_B$ can be found as follows:
\begin{align}
    \label{eq:joint}
    \begin{aligned}
        \text{maximize} \; &  F(\rho_{RB}, \phi_{RB}) \\
        \text{subject to} \; & \rho_{RB} = (\IC_R \ot \NC) \rho_{RA}, \\
        \; & \rho_{RA} = \rho_{R1A1} \ot \rho_{R2A2} \\
        \; & \rho_{RA} \succeq 0, \\
        \; & \Tr (\rho_{RA}) = 1.
    \end{aligned}
\end{align}
The optimum of the above problem is simply $O(\NC_1) O(\NC_2)$. It is
obtained at $\Lm_{RA} = \Lm_{R1A1} \ot \Lm_{R2A2}$ where $\Lm_{R1A1}$
and $\Lm_{R2A2}$ are optima to optimizations of the form~\eqref{opt:maxFPhi}
for $\NC_1$ and $\NC_2$, respectively.
On the other hand, if Alice prepares a state that may correlate the inputs to
$\IC_{R1} \ot \NC_1$ and $\IC_{R2} \ot \NC_2$ then the maximum fidelity 
$\OC(\NC_1 \ot \NC_2)$ is found by solving~\eqref{eq:joint} without the product
constraint, $\rho_{RA} = \rho_{R1A1} \ot \rho_{R2A2}$. This fidelity 
maximum $O(\NC_1 \ot \NC_2)$ can be higher
\begin{equation}
    O(\NC_1 \ot \NC_2) \geq O(\NC_1) O(\NC_2),
    \label{eq:super-mult}
\end{equation}
since the optimum $O(\NC_1) O(\NC_2)$ of~\eqref{eq:joint} bounds from below the
optimum of~\eqref{eq:joint} without the product constraint, $\rho_{RA} =
\rho_{R1A1} \ot \rho_{R2A2}$. 

\begin{theorem}
    The maximum fidelity $O(\NC_1 \ot \NC_2)$ is multiplicative, i.e., equality
    holds in~\eqref{eq:super-mult}
    \begin{equation}
        O(\NC_1 \ot \NC_2) = O(\NC_1) O(\NC_2),
    \end{equation}
    \label{th:multi}
\end{theorem}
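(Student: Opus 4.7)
The plan is to use Theorem~\ref{th:thOne} to re-express each $\OC$ as a spectral norm of a Choi--Jamio\l{}kowski operator, and then invoke the fact that both the Choi operator and the spectral norm factor across tensor products. The inequality $\OC(\NC_1 \ot \NC_2) \geq \OC(\NC_1)\OC(\NC_2)$ is already established in~\eqref{eq:super-mult} by using product inputs, so the substantive content is the matching upper bound.

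First, I would apply Theorem~\ref{th:thOne} to the joint channel $\NC_1 \ot \NC_2$, which maps $\LC(\HC_A) \mapsto \LC(\HC_B)$ with $d = d_{A1}d_{A2}$, giving
\begin{equation}
    \OC(\NC_1 \ot \NC_2) = \frac{1}{d_{A1} d_{A2}} \| J^{\NC_1 \ot \NC_2}_{RB} \|.
\end{equation}
Next, I would verify that the Choi operator of a tensor product of channels factors as a tensor product of Choi operators, up to a permutation of the $R_1,A_1,R_2,A_2,B_1,B_2$ factors that rearranges $\HC_{R_1 R_2 B_1 B_2}$ into $\HC_{R_1 B_1} \ot \HC_{R_2 B_2}$. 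Concretely, using $\ket{\gm}_{RA} = \ket{\gm}_{R_1 A_1} \ot \ket{\gm}_{R_2 A_2}$ in definition~\eqref{eq:cj} and distributing $\IC_R \ot \NC = (\IC_{R_1} \ot \NC_1) \ot (\IC_{R_2} \ot \NC_2)$ yields
\begin{equation}
    J^{\NC_1 \ot \NC_2}_{RB} \;\cong\; J^{\NC_1}_{R_1 B_1} \ot J^{\NC_2}_{R_2 B_2},
\end{equation}
where $\cong$ denotes equality after the aforementioned permutation of tensor factors, which is a unitary conjugation and preserves all singular values.

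Finally, I would use multiplicativity of the spectral norm under tensor products, $\| M \ot N \| = \|M\| \cdot \|N\|$, which follows immediately from its definition as the largest singular value, together with the fact that the singular values of $M \ot N$ are products of singular values of $M$ and $N$. Combining these steps,
\begin{equation}
    \OC(\NC_1 \ot \NC_2) = \frac{\|J^{\NC_1}_{R_1 B_1}\| \cdot \|J^{\NC_2}_{R_2 B_2}\|}{d_{A1}\, d_{A2}} = \OC(\NC_1)\, \OC(\NC_2),
\end{equation}
where the last equality is another application of Theorem~\ref{th:thOne}. The only place requiring any care is bookkeeping the tensor-factor permutation when identifying $J^{\NC_1 \ot \NC_2}_{RB}$ with $J^{\NC_1}_{R_1 B_1} \ot J^{\NC_2}_{R_2 B_2}$; there is no genuine obstacle, since everything reduces to the multiplicativity of operator norm, which is the key property that makes $\OC$ so much better-behaved than quantum capacity.
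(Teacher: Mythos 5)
Your proposal is correct and follows essentially the same route as the second ("alternatively") argument in the paper's own proof: identify $J^{\NC_1\ot\NC_2}_{RB}$ with $J^{\NC_1}_{R1B1}\ot J^{\NC_2}_{R2B2}$ and apply Theorem~\ref{th:thOne} together with the behaviour of the spectral norm under tensor products. The only cosmetic difference is that you invoke exact multiplicativity $\|M\ot N\|=\|M\|\cdot\|N\|$ directly, whereas the paper uses sub-multiplicativity combined with the lower bound~\eqref{eq:super-mult}; the paper also gives a first proof phrased in terms of the standard Kraus decomposition, which is equivalent in content.
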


\begin{proof}
    Let $\NC_{1}$ and $\NC_2$ have standard Kraus decomposition $\{J_q\}$
    and $\{K_r\}$, respectively. Using Th.~\ref{th:thOne}, we write
    \begin{equation}
        \OC(\NC_1) = \frac{1}{d_{A1}} \inpV{J_0}{J_0} \quad \text{and} \quad
        \OC(\NC_2) = \frac{1}{d_{A1}} \inpV{K_0}{K_0}.
        \label{eq:N1N2b}
    \end{equation}
    A standard Kraus decomposition $\{L_p\}$ for $\NC_1 \ot \NC_2$
    can be chosen such that each $L_p$ is of the form $J_q \ot K_r$
    for some $q$ and $r$. When $q=r=0$, then $p$ can be chosen to be 0,
    \begin{equation}
        L_0 = J_0 \ot K_0
        \label{eq:l1}
    \end{equation}
    since
    \begin{equation}
    \inpV{L_0}{L_0} = \inpV{J_0}{J_0}\inpV{K_0}{K_0} \geq
        \inpV{J_q}{J_q}\inpV{K_r}{K_r} = \inpV{L_p}{L_p}
        \label{eq:l1Ineq}
    \end{equation}
    for all $q,r$ and corresponding $p$. Using~\eqref{eq:l1},
    and Th.~\ref{th:thOne} on $\NC_1 \ot \NC_2$ gives
    \begin{equation}
        \OC(\NC_1 \ot \NC_2) = \frac{1}{d_{A1}d_{A2}}\inpV{L_0}{L_0}.
    \end{equation}
    The above equality, together with~\eqref{eq:N1N2b} and~\eqref{eq:l1Ineq}
    proves the result. 

    Alternatively, notice
    \begin{equation}
        J^{\NC}_{RB} = \sum_p \dya{L_p} = 
        \sum_{qr} \dya{J_q} \ot \dya{K_r} = J^{\NC_1}_{R1B1} \ot
        J^{\NC_2}_{R2B2}.
    \end{equation}
    where the first equality follows from~\eqref{eq:choiCan}.
    Using Th.~\ref{th:thOne}, write
    \begin{equation}
        \OC(\NC_1) = \frac{1}{d_{A1}} || J^{\NC_1}_{R1B1}||, \quad
        \OC(\NC_2) = \frac{1}{d_{A2}} || J^{\NC_2}_{R2B2}||, \quad \text{and} \quad
        \OC(\NC_1 \ot \NC_2) = \frac{1}{d_{A}}|| J^{\NC}_{RB}||,
        \label{eq:os}
    \end{equation}
    where $d_A = d_{A1} d_{A2}$.  The operator norm is sub-multiplicative~(see
    Sec.1.1.3 in~\cite{Watrous18}), 
    \begin{equation}
        || A B || \leq || A || \cdot || B ||,
    \end{equation}
    it implies 
    \begin{equation}
        || A \ot B || \leq || A || \cdot || B ||.
    \end{equation}
    Using the above equation along with~\eqref{eq:super-mult} 
    and~\eqref{eq:os} also proves the result.
\end{proof}

\section{Applications}
\subsection{Extremal qubit channels}
\label{sec:glnQbitNoise}

A qubit channel $\NC$ has $d_A = d_B = 2$. If the channel has one Kraus 
operator then the channel is simply conjugation with a unitary matrix
and $\OC(\NC) = 1$. The next simplest qubit channel has two Kraus operators,
given in~\eqref{eq:twoKraus}. 
One special case of this channel is the qubit amplitude damping channel. Kraus
operators for this amplitude channel can be written as,
\begin{equation}
    K_0 = \begin{pmatrix}
        1 & 0 \\
        0 & \sqrt{1-p}
    \end{pmatrix}, \quad \text{and} \quad
    K_1 = \begin{pmatrix}
        0 & \sqrt{p} \\
        0 & 0
    \end{pmatrix},
    \label{eq:adKrs}
\end{equation}
where $0 \leq p \leq 1$ is the probability that the state $\dya{1}$ damps to
$\dya{0}$. 
A simple calculation shows that these Kraus operators constitute a standard
Kraus decomposition of $\NC$.  Using this decomposition in Th.~\ref{th:thOne},
we find
\begin{equation}
    \OC(\NC) = 1 - p/2 \quad \text{and} \quad
    \Lm_{RA} = \frac{\dyad{K_0}{K_0}}{\inpV{K_0}{K_0}},
    \label{eq:rhoOptAD}
\end{equation}
a result that agree with~\cite{BandyopadhyayGhosh12}. In general, the amount
of entanglement generated at the input~(see def.  in~\eqref{eq:inpEnt1}),
\begin{equation}
    E(\NC) = h(\frac{1}{2-p}),
    \label{eq:entAD}
\end{equation}
where $h(x):= -x \log x - (1-x) \log (1-x)$ is the binary entropy function with
$\log$ base $2$. This value is nonzero, unless $p = 1$ where $E(\NC) = 0$ and
$\Lm_{RA}$ in~\eqref{eq:rhoOptAD} is a product state.

When the qubit channel $\NC$ with two Kraus operators is not an amplitude
damping channel, the channel Kraus operators take the
form~\eqref{eq:krsNewPara}. These Kraus operators $\{K_0, K_1\}$ have two
parameters $0 \leq b \leq 1$ and $0 \leq c \leq 1$. If $b$ is fixed and
$c$ is decreased from 1 the channel becomes more noisy~(see discussion
containing~\eqref{eq:composition}). Operators $\{K_0,K_1\}$ 
form a standard Kraus decomposition. Using them in Th.~\ref{th:thOne}, gives
\begin{equation}
    \OC(\NC) = \frac{(1+c)(1-b^2c)}{2(1-b^2c^2)}
    \quad \text{and} \quad
    \Lm_{RA} = \begin{cases}
        \frac{\dyad{K_0^{\dag}}{K_0^{\dag}}}{\inpV{K_0^{\dag}}{K_0^{\dag}}} & \text{if} \quad b \neq  1 \; \text{and} \; c \neq 0 \\
        \sum_{ij} f_{ij} \dyad{K_i^{\dag}}{K_j^{\dag}} & \text{if} \quad b = 1 \; \text{or} \; c = 0
    \end{cases}.
    \label{eq:OGln}
\end{equation}
where complex numbers $f_{ij}$ are free except that they result in a valid
density operator $\Lm_{RA}$. At $b = 1$ or $c = 0$, $\Lm_{RA}$ is supported on
a two-dimensional space spanned by $\{\ket{K_0^{\dag}}_{RA},
\ket{K_1^{\dag}}_{RA}\}$. This two-dimensional space is a subspace of a two
qubit space $\HC_{RA}$. Quite generally, such a subspace has at least one
product state~(see Lemma in~\cite{NiuGriffiths98}), but typically there are
two~\cite{RuskaiSzarekEA02, SiddhuGriffiths16}. In the $c=0$ case, these
product states take the simple form
\begin{equation}
    \ket{+}_R \ot \ket{\psi_{+}}_A  \quad \text{and} \quad
    \ket{-}_R \ot \ket{\psi_{-}}_A,
    \label{eq:prods}
\end{equation}
where $\ket{\psi_+}_A = \frac{1}{\sqrt{2}} ( \sqrt{1+b} \ket{0} + \sqrt{1-b} \ket{1})$,
$\ket{\psi_-}_A = \frac{1}{\sqrt{2}} (\sqrt{1+b} \ket{0} - \sqrt{1-b} \ket{1})$,
$\ket{+}_A = \frac{1}{\sqrt{2}} (\ket{0} +\ket{1})$, and
$\ket{-}_A = \frac{1}{\sqrt{2}} (\ket{0} - \ket{1})$.

At $b = 1$ or $c=0$ one can choose $\Lm_{RA}$ to be a projector onto a product
state. As a result, at $b=1$ or $c=0$, the input entanglement, defined
in~\eqref{eq:entEntro0}, is zero. In general,
\begin{equation}
    E(\NC) = \begin{cases}
        0 & \text{if} \quad  b = 1 \; \text{or} \; c=0\\
        h(\frac{(1+b)(1-bc)}{2(1-b^2c)}) & \text{otherwise}\\  
    \end{cases}
    \label{eq:entGln}
\end{equation}
where expressions for $\EC(\NC)$ at $b \neq 1$ and $c \neq 0$ comes from using
the form of $\Lm_{RA}$ in~\eqref{eq:OGln}. 
In Fig.~\ref{fig:ent-qbit} we fix $b$ and plot $E(\NC)$ as a function of $c$;
increasing $c$ makes $\NC$ less noisy~(see discussion containing
eq.~\eqref{eq:composition}).
In these plots, as $c$ is increased from zero, the value of $\EC(\NC)$
dis-continuously increases from $0$, at $c = 0$, and continues to monotonically
increase until $c=1$, where $\NC$ becomes a perfect channel.
Across various plots with fixed $b$, we notice increasing $b$ decrease
$\EC(\NC)$, which ultimately goes to zero as $b \mapsto 1$ for all $bc \neq 1$.

All these features mentioned above are intriguing. In the parameter range $0 <
c < 1$, one finds an expected result~\cite{PalBandyopadhyayEA14} that the
minimum amount of entanglement at the input to have maximum fidelity with a
fully entangled output is strictly less than one. In particular, if one
generates more than $\EC(\NC)<1$ entanglement at the input, the fidelity with a
maximally entangled output is strictly less. The key addition here is the
quantification of the amount of entanglement and a parametrization of the
channel in such a way that the amount of entanglement is monotone in the noise
parameters of the channel.

Next, at $c = 0$, there is a discontinuous change in $\EC(\NC)$ which starts at
zero and then takes a large finite value $\simeq h\big( (1+b)/2 \big)$. From a
mathematical standpoint, the discontinuity arises because the solution to the
optimization~\eqref{opt:maxFPhi} becomes degenerate and this degeneracy allows
more freedom in choosing optimum inputs. Due to the structure of qubit
channels, this input can be chosen to be separable, as mentioned in the
discussion containing~\eqref{eq:prods}.
\begin{figure}[ht]
	\centering
    \includegraphics[]{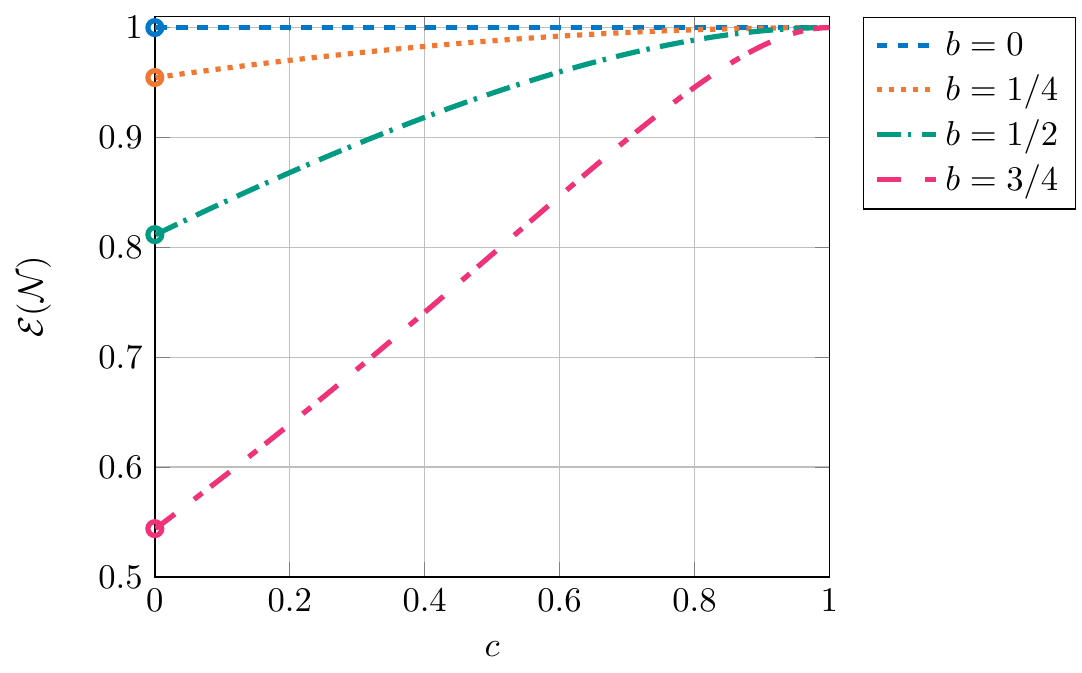}
    \caption{Plot of $\EC(\MC)$ as a function of $c$ for various $b$ values.
    The open circle indicates that the value is zero.}
	\label{fig:ent-qbit}
\end{figure}

\begin{figure}[ht]
	\centering
    \includegraphics[]{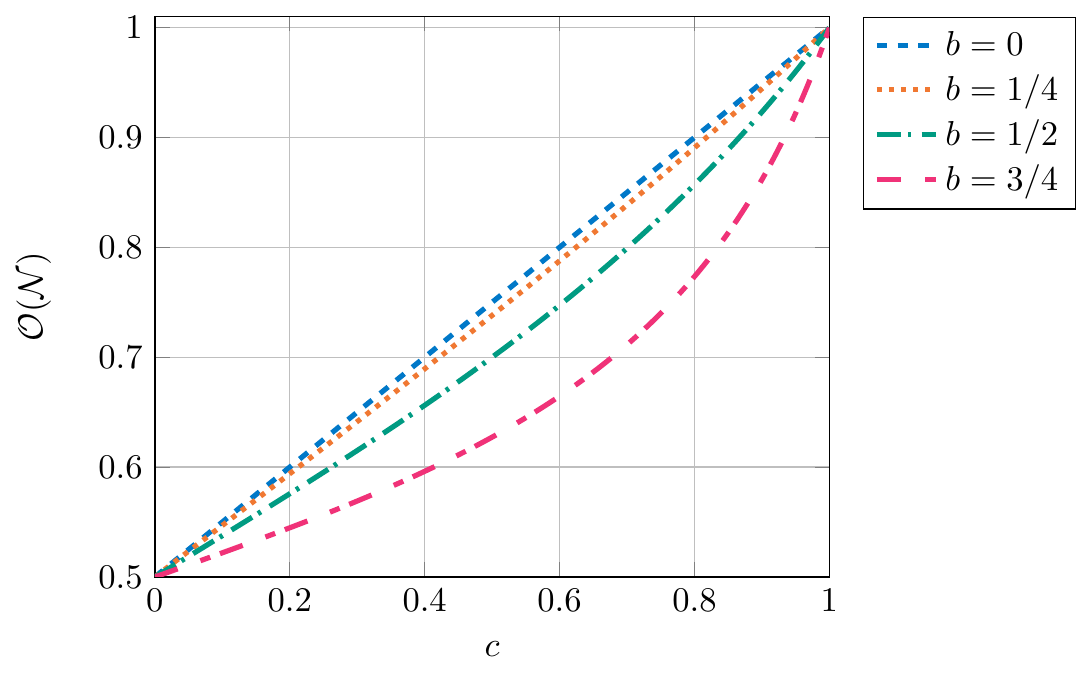}
    \caption{Plot of $\OC(\MC)$ as a function of $c$ for various $b$ values.}
	\label{fig:o-qbit}
\end{figure}

\subsection{Qubit Pauli channels}
\label{sec:pauliCase}

A qubit Pauli channel $\NC: \HC_A \mapsto \HC_B$ can be written as
\begin{equation}
    \NC(\rho) = \sum_i p_i \sg_i \rho \sg_i^{\dag},
    \label{eq:PauliChannel}
\end{equation}
where $p_i \geq 0$, $\sum_i p_i = 1$, and the Kraus operators 
$\{ \sqrt{p_i}\sg_i \}$,  $\sg_i: \HC_A \mapsto \HC_B$, 
are proportional to Pauli matrices. These matrices can be written in the
standard $\{ \ket{0}, \ket{1} \}$ basis of $\HC_A$ and $\HC_B$ as
\begin{equation}
    \sg_0 = I = \begin{pmatrix}
        1 & 0 \\
        0 & 1\\
    \end{pmatrix},
    \quad
    \sg_1 = X = \begin{pmatrix}
        0 & 1 \\
        1 & 0\\
    \end{pmatrix},
    \quad
    \sg_2 = Y = \begin{pmatrix}
        0 & -i \\
        i & 0\\
    \end{pmatrix},
    \quad \text{and} \quad
    \sg_3 = Z = \begin{pmatrix}
        1 & 0 \\
        0 & -1\\
    \end{pmatrix}.
\end{equation}
Without loss of generality we can assume $p_0 \geq p_i$ for all $i \in
\{1,2,3\}$.  This assumption comes from the following argument. Assume $p_i
\geq p_j$ for some $i \neq 0$ and all $j \in \{0,1,2,3\}$, then conjugating the
input $\rho$ with $\sg_i$ will still result in a Pauli
channel~\eqref{eq:PauliChannel}. However, this resulting channel will have $p_0
\geq p_i$ for all $i \in \{1,2,3\}$.

For qubit Pauli channels, the value of $\OC(\NC)$ and the fact that it can be
achieved using a maximally entangled input state $\Lm_{RA}$ was found
in~\cite{PalBandyopadhyayEA14}, however we note later that one can also achieve
$\OC(\NC)$ using a separable pure state when $\NC$ is very noisy.
Since the Pauli matrices are orthogonal to each other, in a standard Kraus
decomposition, $\{K_i\}$, of $\NC$ we can always chose each $K_i$ to be
$\sqrt{p_j} \sg_j$ for some $j$. As $p_0 \geq p_i$, $K_0 = \sqrt{p_0} \sg_0$
and from Th.~\ref{th:thOne} we get
\begin{equation}
    \OC(\NC) = p_0.
    \label{eq:Pauli}
\end{equation}
When $p_0 > p_j$ for all $j$, $\Lm_{RA} = \dya{\sg_0^{\dag}}/2$, i.e.,
$\Lm_{RA}$ is a projector onto a maximally entangled state and thus $\EC(\NC) =
1$. However, if for some $i$, $p_0 = p_i$ then $\Lm_{RA}$ is any density
operator with support in a space spanned by $\{ \ket{\sg_0^{\dag}}_{RA},
\ket{\sg_i^{\dag}}_{RA} \}$. This space is a two-dimensional subspace of a two
qubit space $\HC_{RA}$. Following the discussion containing~\eqref{eq:prods},
this subspace contains at least one product state. As a result, for any $i$ if
$p_0 = p_i$ we can choose $\Lm_{RA}$ to be a product state and thus $\EC(\NC) =
0$.  Consequently,
\begin{equation}
    \EC(\NC) = 
    \begin{cases}
        1 & \text{if} \quad p_0 > p_i \; \forall i, \\
        0 & \text{if} \quad p_0 = p_i \; \text{for some} \; i.
    \end{cases}.
    \label{eq:entPauli}
\end{equation}
When $p_0 = p_i$, the best fidelity with a maximally entangled state at the
output is achieved by sending a separable input $\Lm_{RA}$.  Consequently, the
output $\Lm_{RB}$ is also separable. This separable output is expected to have
a small fidelity with a fully entangled state. This expectation is met, the
condition $p_0 = p_i$ together with $\sum_i p_i = 1$ forces $p_0 \leq 1/2$, and
thus $\OC(\NC) \leq 1/2$. Such a value of half for fidelity with a maximally
entangled state $\ket{\phi}_{AB}$ is considered small since this value of half
can be achieved by a simple separable state $\rho_{RB} = \frac{1}{2} (\dya{00}
+ \dya{11})$.

One may wonder which qubit Pauli channels satisfy $p_0 = p_i \geq p_j$. 
Any qubit Pauli channel of this type is anti-degradable.  In general, $\NC$
in~\eqref{eq:PauliChannel} with $p_0 \geq p_i$ is
anti-degradable~\cite{NiuGriffiths98, Cerf00, CubittRuskaiEA08} if and only if
\begin{equation}
    p_1 + p_2 + p_3 + \sqrt{p_1p_2} + \sqrt{p_1p_3} + \sqrt{p_2p_3} \geq 1/2.
    \label{eq:anti-Deg}
\end{equation}
We are interested in the case where $p_0 = p_i$ for some $i$. The above
condition remains unaffected when permuting $p_i$ and $p_j$, thus we let $p_0 =
p_1 = p$, denote $p_2$ by $q$ then $p_3 = 1-2p-q$. Using these substitutions on
the left side of~\eqref{eq:anti-Deg}, together with $1 \geq p \geq q \geq 0$
and $p \geq 1 - 2p - q$ we find that the above inequality~\eqref{eq:anti-Deg}
is always satisfied.  Thus $p_0 = p_i \geq p_j$ implies that the qubit Pauli
channel $\NC$ is anti-degradable.

Pauli channels~\eqref{eq:PauliChannel} have a key property, up to local
unitaries at the channel input and output, any unital qubit channel can always
be written as a Pauli channel~\cite{RuskaiSzarekEA02}. An interesting
observation about qubit channels is that $\Lm_{RA}$ in Th.~\ref{th:thOne} can
be chosen to be a maximally entangled state if and only if $\NC$ is
unital~\cite{PalBandyopadhyayEA14}. It is interesting for that reason to ask if
such a result holds in higher dimension. In this next section, we find that it
doesn't. 

In the case of qubit Pauli channels, but also for extremal qubit channels, we
found that it is possible to find separable input states $\Lm_{RA}$ that
achieve the most fidelity with a fully entangled state at the channel output.
This separable state appeared when a qubit channel $\NC$'s standard Kraus
decomposition $\{K_i\}$ satisfied the condition $\inpV{K_0}{K_0} =
\inpV{K_j}{K_j}$, for at least one $j \neq 0$. Using eq.~\eqref{eq:choiCan},
this condition reduces to the channel's Choi-Jamio\l{}kowsi operator
$J^{\NC}_{RB}$ having its largest eigenvalue be degenerate. In general, we have
the following lemma.

\begin{lemma}
    \label{lm:qbit}
    If $\NC$ is a qubit channel and the largest eigenvalue of $J^{\NC}_{RB}$
    is degenerate, then $\Lm_{RA}$ in Th.~\ref{th:thOne} can be chosen to be
    separable. 
\end{lemma}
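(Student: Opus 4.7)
The plan is to combine Theorem~\ref{th:thOne}, which constrains where $\Lm_{RA}$ must be supported, with the standard fact that every subspace of dimension at least two in a two-qubit Hilbert space contains a product vector.

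First, I would unpack what the degeneracy hypothesis buys us through Th.~\ref{th:thOne}. The theorem states that $\Lm_{RA}$ may be any density operator supported in the span of the kets $\{\ket{K_i^{\dag}}_{RA}\}$ for which $\inpV{K_i}{K_i}=\inpV{K_0}{K_0}$; equivalently, via~\eqref{eq:choiDchoi}, on the eigenspace of the maximum eigenvalue of $J^{\NC}_{RB}$. If that eigenvalue is degenerate, this eigenspace $\mathcal{V} \subseteq \HC_{RA}$ has dimension $\dim \mathcal{V} \geq 2$, and any density operator on $\mathcal{V}$ is an optimal input.

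Second, I would invoke the product-state fact already cited in the discussion around~\eqref{eq:prods}: in a two-qubit space $\HC_{RA}$, any subspace of dimension at least two contains at least one product vector $\ket{\chi}_R \otimes \ket{\eta}_A$ (Lemma in~\cite{NiuGriffiths98}). Since $\NC$ is a qubit channel, $\HC_{RA}$ is precisely such a two-qubit space and $\mathcal{V}$ satisfies the dimension hypothesis, so we may pick a product vector $\ket{\chi}_R \otimes \ket{\eta}_A \in \mathcal{V}$.

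Finally, I would take $\Lm_{RA} := \dya{\chi}_R \otimes \dya{\eta}_A$, which is a rank-one projector on $\mathcal{V}$, hence a valid density operator satisfying the hypothesis of Th.~\ref{th:thOne}, and manifestly separable. The main (minor) obstacle is simply checking that the subspace-contains-product-state lemma applies in our setting, which reduces to verifying $\dim \mathcal{V} \geq 2$ and noting $d_R = d_A = 2$; both are immediate from the degeneracy assumption and the qubit hypothesis, respectively.
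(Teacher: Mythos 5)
Your proposal is correct and follows essentially the same route as the paper's own proof: invoke Theorem~\ref{th:thOne} to locate the optimal input's support in the (at least two-dimensional) degenerate eigenspace, apply the lemma of~\cite{NiuGriffiths98} that any two-dimensional subspace of a two-qubit space contains a product vector, and take $\Lm_{RA}$ to be the projector onto that vector. No gaps.
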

\begin{proof}
    Let $\{K_i\}$ be a standard Kraus decomposition of $\NC$.  Since
    $J^{\NC}_{RB}$ is degenerate, $\inpV{K_0}{K_0} = \inpV{K_1}{K_1}$ and
    $\Lm_{RA}$ has support in the span of $\{ \ket{K_0^{\dag}},
    \ket{K_1^{\dag}} \}$.  This support is a two-dimensional subspace of a
    two-qubit space, and thus contains a product state. Hence $\Lm_{RA}$ can be
    chosen to be a projector onto this product state.
\end{proof}

While it may be tempting to conjecture that the above result holds in higher
dimensional channels, we show in the next section that it doesn't.

\subsection{Some qutrit channels}
\label{sec:qtrit}

We construct two qutrit channels. The first channel, $\MC$, is not unital but
its optimal input state $\Lm_{RA}$, defined in Th.~\ref{th:thOne}, is unique
and maximally entangled.  The second channel, $\PC$, is unital, however its
optimal input state $\Lm_{RA}$ is neither maximally entangled nor separable.
Using the second channel, we demonstrate that when the largest eigenvalue of
$J^{\NC}_{RB}$ is degenerate, $\Lm_{RA}$ can still be entangled. The
demonstration contrasts with Lemma~\ref{lm:qbit}.

Let $\HC_A$ and $\HC_B$ be three-dimensional Hilbert spaces. Let $\MC:
\LC(\HC_A) \mapsto \LC(\HC_B)$ be a channel with Kraus operators
\begin{equation}
    K_0 = \sqrt{\lm} I, \quad
    K_1 = \sqrt{1-\lm} (\dyad{0}{1} + \dyad{1}{0}), \quad \text{and} \quad
    K_2 = \sqrt{1-\lm} \dyad{1}{2},
    \label{eq:krsQtrit}
\end{equation}
where $0 \leq \lm \leq 1$. This channel $\MC$ is not unital, except when $\lm =
1$. When $2/5 < \lm < 1$, $\{K_i\}$ is a standard Kraus decomposition of $\MC$
with $\inpV{K_0}{K_0} > \inpV{K_i}{K_i}$ for all $i \neq 0$. From
Th.~\ref{th:thOne} we find
\begin{equation}
    \OC(\MC) = \lm, \quad \Lm_{RA} = \frac{1}{3}\dya{I}, \quad
    \text{and} \quad \EC(\MC) = \log_2 3.
\end{equation}
Thus when $2/5 < \lm < 1$, the input $\Lm_{RA}$ is unique, and it is maximally
entangled, however the channel $\MC$ is not unital.

Let $\PC: \LC(\HC_A) \mapsto \LC(\HC_B)$ be a qutrit channel with Kraus
operators
\begin{align}
    \begin{aligned}
        L_0 &= \sqrt{\frac{z+2}{4}} \big( \dyad{0}{1} + \dyad{1}{0} \big),
    \quad
        L_1 = \sqrt{\frac{1-z}{2}} \big( \dyad{1}{2} + \dyad{2}{1} \big),
    \\
        L_2 &= \sqrt{\frac{1-z}{2}} \big( \dyad{0}{2} + \dyad{2}{0} \big),
    \quad \text{and} \quad
        L_3 = \sqrt{\frac{z}{4}} \big( \dya{0} + \dya{1} - 2\dya{2} \big),
    \end{aligned}
    \label{eq:krsQtrit2}
\end{align}
where $0 \leq z \leq 1$. Since each Kraus operator $L_i$ is Hermitian, $\PC$ is
unital~(see discussion below~\eqref{eq:kraus}). Kraus operators $\{L_i\}$ are
standard and thus Th.~\ref{th:thOne} immediately gives $\OC(\MC) = (z+2)/6$.
When $z \neq 0$,
\begin{equation}
    \Lm_{AR} = \dya{L_0^{\dag}}
\end{equation}
where $\ket{L_0^{\dag}}_{RA} = \frac{1}{\sqrt{2}} (\ket{01} + \ket{10})$ is not
a maximally entangled state of two qutrits. When $z = 0$, $L_3 = 0$, $\inp{L_0}
= \inp{L_1} = \inp{L_2}$ and thus largest eigenvalue of $J^{\MC}_{RB}$ has a
degenerate spectrum. In this case, $\Lm_{RA}$ has support in a subspace $\SC$
spanned by $\{ \ket{L_0^{\dag}}_{RA}, \ket{L_1^{\dag}}_{RA},
\ket{L_2^{\dag}}_{RA} \}$. 
This subspace only contains non-product vectors, i.e., it is {\em completely
entangled} in the sense of Parthasarathy~(see Def. 1.2
in~\cite{Parthasarathy04}).  Consequently, any density operator $\Lm_{RA}$
supported on this subspace is entangled.

\section{Discussion}

In this work we considered a one-shot setting where one half of any bipartite
mixed state may be sent across a single use of a fixed channel $\NC$. 
The goal in this setting is to share a state with maximum fidelity $\OC(\NC)$
to a fully entangled state.
Interestingly, maximum fidelity $\OC$ defined in the one-shot setting fully
characterizes the ability of any channel to share high fidelity entanglement
over multiple channel uses, possibly used in parallel with other channels.
This extension follows from multiplicative nature of $\OC$, proved in
Sec.~\ref{sec:addtvty}.

Using a semi-definite program, we reformulate the maximum fidelity, found
previously for pure state inputs~\cite{PalBandyopadhyay18,
PalBandyopadhyayEA14, VerstraeteVerschelde03}.  
The first reformulation, see Theorem~\ref{th:thOne} and its proof, makes
greater use of a channel's Kraus operators rather than its Choi-Jamio\l{}kowski
operator, as done previously.
In particular, optimal input(s) achieving $\OC(\NC)$ are simply linear
combinations of flattened versions of a channel's standard Kraus operators with
largest norm, and the optimal value $\OC(\NC)$ is this largest norm itself.
These two channel representations are formally equivalent~(see
Sec.~\ref{sec:qChan} for brief discussion), however the Kraus decomposition can
sometimes be easier to work with and can provide different insights when
discussing maximum fidelity $\OC(\NC)$, but perhaps in other cases as well.
In the present case, the standard Kraus operators~(see Sec.~\ref{sec:standard}
for definition) simplifies the search for and broadens the types of channel
inputs $\Lm_{RA}$ which achieve $\OC$.

One way in which we have broadened the search for optimal inputs $\Lm_{RA}$ is
to identify channels $\NC$ for which $\Lm_{RA}$ can be chosen to be separable.
This choice appears in two notable cases. First, when $\NC$ is an extremal
qubit channel. Here, separability of $\Lm_{RA}$ leads to a discontinuous jump
in the minimal amount of entanglement $\EC(\NC)$ generated to achieve maximum
fidelity with a fully entangled state~(see discussion with
Fig~\ref{fig:ent-qbit}).
A second notable case where $\Lm_{RA}$ can be chosen to be separable is for
noisy unital qubit channels where the input may be ordinarily chosen to be
fully entangled~(see discussion containing eq.~\eqref{eq:entPauli}).
These findings motivate a characterization of channels $\NC$ for which
$\Lm_{RA}$ is possibly separable, i.e., $\EC(\NC) = 0$.  One typically expects
such channels to not be useful for sharing entanglement in the type of one-shot
setting discussed in Sec.~\ref{sec:BasicQ}. One example of such channels is in
Lemma~\ref{lm:qbit}. The lemma extends to channels with Choi-Jamio\l{}kowsi
operator $J^{\NC}_{AB}$ having a greater than $(d-1)^2$ fold degeneracy in
their largest eigenvalue. The support of this largest eigenvalue subspace
always has a product state~(proof for this can be constructed using Prop~1.4
in~\cite{Parthasarathy04}) and thus, $\Lm_{RA}$ can be chosen to be a product
state and $\EC(\NC) = 0$. On the other hand, we also find a channel whose
Choi-Jamio\l{}kowsi operator has a degeneracy in its largest eigenvalue but the
optimal input for the channel must be entangled.
 
Another way in which we have broadened the search for optimal inputs $\Lm_{RA}$
is to consider extension of results found previously.  For qubit channels, a
fully entangled input was known to achieve $\OC$ if and only if the channel was
unital. In higher dimensions, we find this result no longer holds. We construct
a unital qutrit channel for which the optimal input must be less than fully
entangled. We also construct a qutrit channel which is not unital, but for
which a fully entangled input is necessary to obtain the largest overlap.

Our second reformulation of $\OC(\NC)$ in Theorem~\ref{th:thOne} notes that it
equals the operator norm of the channel's Choi-Jamio\l{}kowski operator, upto
normalization.  This observation can not only simplify discussions about
$\OC(\NC)$~(for instance see proof of Th.~\ref{th:multi}), it also gives the
operator norm of the Choi-Jamio\l{}kowski operator a simple interpretation.

The single channel use setting discussed here can be extended by allowing the
reference system and the channel output system to be processed using local
operations and one-way or two-way classical communication, labeled 1-LOCC and
2-LOCC respectively.
Building on ideas in~\cite{VerstraeteVerschelde02,VerstraeteVerschelde03a}, it
has been shown for qubit channels that maximum fully entangled fraction
allowing a single round of 2-LOCC, $\OC'$, equals
$\OC$~\cite{PalBandyopadhyayEA14}. Understanding $\OC'$ in higher dimensional
channels while exploring optimal protocols and multiplicativity of $\OC'$ may
form an interesting direction of future work.  
Extending our work to a setting where the reference system also becomes noisy
may be interesting.  Prior discussions~\cite{FilippovRybarEA12,
FilippovFrizenEA18} on this setting connect with entanglement annihillating
channels~\cite{MoravcikovaZiman10}.
Another direction can come from
extending results in Sec.~\ref{sec:pauliCase} where we show that that a set of
qubit Pauli channels with $\EC(\NC) = 0$ also have no quantum capacity $\QC$.
It could be interesting to study the relation of $\OC$ and $\EC$ to $\QC$.

Along the way to analyzing the maximum fidelity, we found it useful to study
extremal qubit channels. These simple channels can be considered the most basic
qubit channels. However, to our knowledge, noise parameters for these channels
have not been adequately discussed. In Sec.~\ref{sec:extremeQubit}, we show the
pcubed point of view allows one to identify noise parameters for this channel
in such a way that channel becomes demonstrably noisier as a parameter is
varied monotonically. Hope is that such identification makes this channel class
a better understood and non-trivial test-bed for ideas in quantum information
science. We also flesh out two useful properties of general channels. First, in
Sec.~\ref{sec:standard} the existence of a standard Kraus decomposition where
the Kraus operators are orthogonal and their norm is ordered. Second, in
Sec.~\ref{sec:dual}, we show how the Choi–Jamio\l{}kowski operator of a channel
and its dual can always be taken to be complex conjugates of each other.

\section{Acknowledgements}
VS thanks Felix Leditzky for helpful discussions, Sergey Filippov for bringing
ref.~\cite{FilippovRybarEA12, FilippovFrizenEA18} to his attention, Chloe Kim
and Dina Abdelhadi for useful comments.

\end{document}